\newtheorem{thm}{Theorem}
\newtheorem{lem}{Lemma}
\newtheorem{remark}{Remark}
\journal{Journal of Financial Econometrics}
\begin{document}

	\begin{frontmatter}
		\title{Forecasting High-Dimensional Realized Volatility Matrices Using A Factor Model}
		\author{Keren Shen \tnoteref{mytitlenote}}
		\address{Department of Statistics and Actuarial Science\\		
			The University of Hong Kong\\
			Pokfulam, Hong Kong}
		\tnotetext[mytitlenote]{Corresponding Author:}
		\ead{rayshenkr@hku.hk}

		\author{Jianfeng Yao}
		\address{Department of Statistics and Actuarial Science\\		
			The University of Hong Kong\\
			Pokfulam, Hong Kong}
		\author{Wai Keung Li}
		\address{Department of Statistics and Actuarial Science\\		
			The University of Hong Kong\\
			Pokfulam, Hong Kong}
		
		\begin{abstract}
			Modeling and forecasting covariance matrices of asset returns play a crucial role in finance. The availability of high frequency intraday data enables the modeling of the realized covariance matrix directly. However, most models in the literature suffer from the curse of dimensionality. To solve the problem, we propose a factor model with a diagonal CAW model for the factor realized covariance matrices. Asymptotic theory is derived for the estimated parameters. In an extensive empirical analysis, we find that the number of parameters can be reduced significantly. Furthermore, the proposed model maintains a comparable performance with a benchmark vector autoregressive model.
		\end{abstract}
		
		\begin{keyword}
			High-dimension \sep High-frequency \sep Realized covariance matrices \sep Factor model \sep Wishart distribution
			\JEL C31\sep C38 \sep C55
		\end{keyword}
		
	\end{frontmatter}
 \doublespacing
 \newpage
 \section{Introduction}
 \noindent
 Modeling and forecasting covariances or volatility matrices of asset returns play a crucial role in many financial fields, such as portfolio allocation \citep{M52} and asset pricing \citep{BEW88}. With the availability of intraday financial data nowadays, it becomes possible to estimate volatilities and co-volatilities of asset returns using high-frequency data directly, which leads to the so-called {\em realized covariance matrix} (\citealp{ABDL03}, \citealp{BS04} and \citealp{BHLS11}). Two major problems arise in the estimation of realized covariance matrices. Firstly,  
 transactions for different assets are typically asynchronous so that the high-frequency prices of different assets do not change simultaneously. Secondly, it is widely believed that the observed high-frequency prices are accompanied
 by {\em microstructure noise} so that the observed prices should be thought as a noisy version of the true underlying price process. Researchers have proposed several ways to tackle these problems, for example, the overlap intervals and previous tick method by \citet{HY05} and \citet{Z11}, respectively. Moreover, \citet{BMOV12} use a refresh time scheme and \citet{CKP10} propose the pre-averaging approach.
 
 Once constructed, realized covariance matrices are analyzed using multivariate models. There are two major issues to be resolved in modeling realized covariance matrices. The first issue is that the model should guarantee the positive definiteness of fitted covariance matrices. A natural choice in this aspect is the family of matrix-valued Wishart distributions which automatically generates random positive definite matrices without imposing additional constraints. Several models
 related to the Wishart distribution have been put forward. \citet{GJS09} propose the Wishart Autoregressive (WAR) model where the realized covariance matrix has a conditional distribution which is noncentral Wishart with a non-centrality parameter depending on lagged covariances and a fixed scaling matrix. Later, \citet{GGL12} propose the Conditional Autoregressive Wishart (CAW) model under which the conditional distribution is central Wishart with time dependent scaling matrices. Moreover, \citet{NLY14} generalize the above two models to construct Generalized Conditional Autoregressive Wishart (GCAW) model. There are other models involving the Wishart distribution, for instance, see \citet{JM09}.
 
 The second issue in modeling realized covariance matrices is the high-dimensionality. Indeed, covariance matrices have $d(d+1)/2$ entries for $d$ assets; consequently, the number of parameters in the model for realized covariance matrices grows quickly with $d$. For example, for an unrestricted CAW(2,2) model with $d=10$ assets, as many as 456 parameters are needed so that
 it is quite challenging to fit such a model in practice. This is probably the major reason why all the empirical studies we find in the literature on model fitting for realized covariance matrices are all limited to a {\em small number}, say 3 to 5 assets. Another problem inherent in realized covariance matrices is that they deviate from their population counterpart, the so-called {\em integrated covariance matrix}, when the number of assets is large compared to the sample size (\citealp{JL09}; and \citealp{WZ10}). It is therefore important to build statistical models for realized covariance matrices with a large dimension, say several
 tens.

 Improved estimators of realized covariance matrices are proposed in \citet{WZ10} with a so-called averaging realized volatility matrix (ARVM) estimator. \citet{TWYZ11} propose the threshold averaging realized volatility matrix (TARVM) estimator which is of two-scale and uses the previous-tick method and the threshold technique in constructing realized volatility matrices. Then, inspired by \citet{Z06} and \citet{FW07}, \citet{TWC13} propose the threshold multi-scale realized volatility matrix (TMSRVM) estimator. The TARVM and TMSRVM estimators are shown to be consistent for the integrated covariance matrix when the dimension of the realized covariance matrix, the sample size of intraday points and the length of sampling days go to infinity. In addition, the TMSRVM estimator proves to have the optimal convergence rate under the existence of the microstructure noise.
 
 As an effort to control the parametric dimension of the models, \citet{TWYZ11} propose to first identify a small number of factors for the realized covariance matrices and to fit a vector autoregressive (VAR) model to the vectorized factor covariance matrices. They show that such a factor model significantly reduces the number of parameters needed to fit the realized covariance matrices. However, the VAR specification is not able to ensure the positive definiteness of the predicted factor covariance matrices. In addition (see below), such a VAR fit still needs $\mathcal{O}(r^4)$ number of parameters with $r$ factors.
 
 In this article, we adopt the factor model approach introduced in \citet{TWYZ11} for realized covariance matrices. In order to overcome the aforementioned weakness
 of their VAR fit for the extracted factors, we propose a diagonal CAW model which has several advantages. Firstly, the proposed CAW model is able to guarantee automatically the positive definiteness of the covariance matrices generated from the model without imposing additional constraints. Secondly, as will be shown by extensive data analysis reported in this paper, 
 our model has excellent empirical performance in terms of the reduction of number of parameters compared to the VAR approach proposed in \citet{TWYZ11}: indeed we obtain comparable forecasting performance with much less parameters. 
 
 In a related work, \citet{AM14} also use a combination of factor extraction and CAW modeling and report some empirical studies with $7$ assets which is still considered to be small dimension. In this paper, we focus on a larger collection of assets where empirical studies are carried out for 30 assets.  
 A further difference in this paper is that we also propose a thorough theoretical analysis of both the factor modeling and the CAW estimation. 
 
 The rest of the paper is organized as follows. Section 2 introduces the model setup and our approach based on a factor model and a diagonal CAW model for the extracted factors. In Section 3, the asymptotic theory is established. In Sections 4 and 5, we report the middle and large scale data analysis on asset prices, respectively. Conclusions are presented in Section 6. Proofs of the asymptotic theory are provided in the Appendix.\\
 
 \section{Methodology}
 \subsection{Model setup}
 \noindent
 Suppose there are $d$ assets and their log price process $\mathbf{X}(t) = \{X_1(t),\cdots,X_d(t)\}'$ follows a continuous diffusion model:
 \begin{equation}
 	d\mathbf{X}(t) = \mathbf{\mu}_tdt + \mathbf{\sigma}_td\mathbf{W}_t, \quad t \in [0,T],
 \end{equation}
 where $\mathbf{\mu}_t$ is a drift in $\mathbb{R}^d$, $\mathbf{W}_t$ a standard $d$-dimensional Brownian motion and $\mathbf{\sigma}_t$ a $d \times d$ matrix. The {\em integrated volatility matrix} for the $t$-th day is defined as
 \begin{equation}
 	\mathbf{\Sigma}_x(t) = \int_{t-1}^{t} \mathbf{\sigma}_s\mathbf{\sigma}_s' ds, \quad t = 1,\cdots,T.
 \end{equation}
 However, it is commonly admitted that the microstructure noise is inherent in the high-frequency price process so that we are not able to observe directly $X_i(t)$, but $Y_i(t_{i\ell})$, a noisy version of $X_i(\cdot)$ at times $t_{i\ell}$, $\ell=1,\cdots, n_i$, $i = 1, \cdots, d$. Here, $n_i$ is the total trading times and $t_{i\ell}$ is the $\ell$-th trading times of asset $i$ during a giving trading day $t$. The observations $Y_i(t_{i\ell})$ is allowed to be non-synchronized, i.e. $t_{i\ell} \ne t_{j\ell}$ for any $i \ne j$. In this paper, we assume that
 \begin{equation}
 	Y_i(t_{i\ell}) = X_i(t_{i\ell}) + \epsilon_i(t_{i\ell})
 \end{equation}
 where $\epsilon_i(t_{i\ell})$ are i.i.d. microstructure noise with mean zero and variance $\eta_i$, and $\epsilon_i(\cdot)$ and $X_i(\cdot)$ are independent with each other.\\
 
 \subsection{Realized covariance matrix estimator}
 \noindent
 Several issues arise for the estimation of $\mathbf{\Sigma}_x(t)$: 1. asynchronous observations of different assets; 2. microstructure noise; 3. the number of assets can be larger than the sample size. In this paper, we adopt the threshold Multi-Scale Realized Volatility Matrix estimator (threshold MSRVM) proposed by \citet{TWC13}, denoted by $\mathbf{\hat{\Sigma}}_x(t)$, $t = 1,\cdots,T$. The threshold MSRVM estimator has many attractive properties, for instance, it is consistent for the high-dimensional integrated co-volatility matrix with the optimal convergence rate. Briefly, the idea of the threshold MSRVM estimator is the following: the previous-tick method is used to construct the raw realized covariance matrices. Then, a multi-scale estimator is evaluated which is actually a kind of average of those raw estimators. In addition, the multi-scale estimator is regularized using a thresholding method, that is, matrix entries under a threshold are set to be zero.\\
 
 \subsection{Matrix factor model}
 \noindent
 We adopt the factor model proposed by \citet{TWYZ11} to reduce the large dimension of $\mathbf{\Sigma}_x(t)$:
 \begin{equation}
 	\mathbf{\Sigma}_x(t) = \mathbf{A}\mathbf{\Sigma}_f(t)\mathbf{A}'+\mathbf{\Sigma}_0,
 \end{equation}
 for $t=1,\cdots,T$, where $\mathbf{\Sigma}_f(t)$ are $r \times r$ positive definite factor covariance matrices, $\mathbf{\Sigma}_0$ is a $d \times d$ positive definite constant matrix and $\mathbf{A}$ is a $d \times r$ factor loading matrix normalized by the constraint $\mathbf{A}'\mathbf{A}=\mathbf{I}_r$. As in a standard factor model, only the left-hand side of Equation (4) is observed. The unknown quantities are estimated using the method in \citet{TWYZ11}. Let
 \begin{equation}
 	\mathbf{\bar{\Sigma}}_x=\frac{1}{T}\sum_{t=1}^{T}\mathbf{\Sigma}_x(t), \quad \mathbf{\bar{S}}_x=\frac{1}{T}\sum_{t=1}^{T}\{\mathbf{\Sigma}_x(t)-\mathbf{\bar{\Sigma}}_x\}^2,
 \end{equation}
 and
 \begin{equation}
 	\mathbf{\bar{\hat{\Sigma}}}_x=\frac{1}{T}\sum_{t=1}^{T}\mathbf{\hat{\Sigma}}_x(t), \quad \mathbf{\bar{\hat{S}}}_x=\frac{1}{T}\sum_{t=1}^{T}\{\mathbf{\hat{\Sigma}}_x(t)-\mathbf{\bar{\hat{\Sigma}}}_x\}^2.
 \end{equation}
 Next, the estimator $\mathbf{\hat{A}}$ is obtained using the $r$ orthonormal eigenvectors of $\mathbf{\bar{\hat{S}}}_x$, corresponding to its $r$ largest eigenvalues, as its columns. Finally, the estimated factor covariance matrices are
 \begin{equation}
 	\mathbf{\hat{\Sigma}}_f(t) = \mathbf{\hat{A}}'\mathbf{\hat{\Sigma}}_x(t)\mathbf{\hat{A}}
 \end{equation}
 for $t = 1,\cdots,T$; and $\mathbf{\Sigma}_0$ is estimated by
 \begin{equation}
 	\mathbf{\hat{\Sigma}}_0 = \mathbf{\bar{\hat{\Sigma}}}_x - \mathbf{\hat{A}}\mathbf{\hat{A}}'\mathbf{\bar{\hat{\Sigma}}}_x\mathbf{\hat{A}}\mathbf{\hat{A}}'.
 \end{equation}

 \subsection{CAW modeling for factor covariance matrix}
 \noindent
 With the estimated factor covariance matrices $\{\mathbf{\hat{\Sigma}}_f(t)\}$ calculated by (7) and (8), we construct a dynamic structure by fitting a diagonal CAW model to $\mathbf{\tilde{\Sigma}}_f(t)$, where $\mathbf{\tilde{\Sigma}}_f(t) := \mathbf{\Sigma}_f(t)+\mathbf{A}'\mathbf{\Sigma}_0\mathbf{A}$. Here, $\mathbf{\tilde{\Sigma}}_f(t)$ is modeled rather than $\mathbf{\Sigma}_f(t)$ since $\mathbf{\hat{\Sigma}}_f(t)$ is in fact a consistent estimator of the former, and it is impossible to construct a consistent estimator for the latter, to our knowledge.
 
 The model is defined as follows. Let $\mathcal{F}_{t-1} = \sigma(\mathbf{\tilde{\Sigma}}_f(s),s<t)$ be the past history of the process at time $t$. Conditional on $\mathcal{F}_{t-1}$, $\mathbf{\tilde{\Sigma}}_f(t)$ follows a central Wishart distribution
 \begin{equation}
 	\mathbf{\tilde{\Sigma}}_f(t)|\mathcal{F}_{t-1} \sim \mathcal{W}_n(\nu,\mathbf{S}_f(t)/\nu),
 \end{equation}
 with $\nu$ the degrees of freedom and the scaling matrix $\mathbf{S}_f(t)$. Moreover, the scaling matrix $\mathbf{S}_f(t)$ follows a linear recursion of order $(p,q)$
 \begin{equation}
 	\mathbf{S}_f(t) = CC' + \sum_{i=1}^{p}B_i\mathbf{S}_f(t-i)B_i'+\sum_{j=1}^{q}A_j\mathbf{\tilde{\Sigma}}_f(t-j)A_j',
 \end{equation}
 where $A_j$, $B_i$ and $C$ are all $r \times r$ matrices of coefficients.
 
 In summary, the CAW process depends on the parameters $\{\nu,C, (B_i)_{1 \le i \le p}, (A_j)_{1 \le j \le q}\}$ without additional constraints, so that the total number of parameters is equal to $(p+q)r^2+\frac{r(r+1)}{2}+1= \mathcal{O}(r^2)$ which still grows quickly with the number of factors $r$ and the order $p$ and $q$. Since the main aim of the paper is to propose a practically feasible model for a large number of assets while retaining efficiency, we will restrict ourselves to {\em diagonal} coefficient matrices $C$, $(B_i)_{1 \le i \le p}$ and $(A_j)_{1 \le j \le q}$. Therefore, the number of parameters becomes $(p+q+1)r+1 = \mathcal{O}(r)$. Notice that this setup is also supported by the fact that in the literature (\citealp{MS92}, \citealp{EK95}), researchers tend to use diagonal volatility models to avoid overparameterization and argue that the variances and the covariances rely more on its own past than the history of other variances or covariances. In the empirical study developed below, we find that the diagonal models achieve a comparable performance with unrestricted ones, while being much more parsimonious and requiring far less computing time. Notice however, the asymptotic theory developed below is also valid for unrestricted matrices $A_j$'s, $B_i$'s and $C$.
 
 The estimation of the parameters $\theta = (\nu, \text{diag}(C)',\text{diag}(B_i)_{1 \le i \le p}',
 \text{diag}(A_j)_{1 \le j \le q}')'$ of the diagonal CAW$(p,q)$ model is carried out by maximizing the log-likelihood function using the Broyden-Fletcher-Goldfarb-Shanno (BFGS) optimization procedure. Positivity of the diagonal elements $C_{kk}$, $A_{11,j}$ and $B_{11,i}$ are enforced, where $1 \le k \le r$. The log-likelihood function is
 \begin{eqnarray}
 	\mathcal{L}(\theta) &=& \frac{1}{T}\sum_{t=1}^{T}\{-\frac{\nu r}{2}\text{ln}(2) - \frac{r(r-1)}{4}\text{ln}(\pi)-\sum_{i=1}^{r}\text{ln}\Gamma(\frac{\nu+1-i}{2}) -\frac{\nu}{2} \text{ln}|\frac{\mathbf{S}_f(t)}{\nu}| \nonumber \\
 	&&  + (\frac{\nu-r-1}{2})\text{ln}|\mathbf{\tilde{\Sigma}}_f(t)|-\frac{1}{2}\text{tr}(\nu\mathbf{S}_f(t)^{-1}\mathbf{\tilde{\Sigma}}_f(t))\}
 \end{eqnarray}
 In practice, initial values for $\mathbf{S}_f(t)$ are needed to run the maximization of the log-likelihood function. For example, if the order $(p,q)=(2,2)$ is used, then the initial values $\mathbf{S}_f(1)$ and $\mathbf{S}_f(2)$ are needed. In the empirical analysis of this paper, we take $\mathbf{S}_f(1)=\mathbf{\hat{\Sigma}}_f(1)$ and $\mathbf{S}_f(2)=\mathbf{\hat{\Sigma}}_f(2)$ as $\mathbf{S}_f(t)$ is the conditional expectation of $\mathbf{\tilde{\Sigma}}_f(t)$, for any $t$.\\
 
 \section{Asymptotic theory}	
 \noindent
 Given a $d$-dimensional vector $\mathbf{x} = (x_1,\cdots,x_d)'$ and a $d \times d$ matrix $\mathbf{U} = (U_{ij})$, define vector and matrix norms as 
 \begin{equation}
 	||\mathbf{x}||_2 = (\sum_{i=1}^{d}|x_i|^2)^{1/2}, \quad ||\mathbf{U}||_2 = \text{sup}\{||\mathbf{Ux}||_2,||\mathbf{x}||_2=1\}
 \end{equation}
 In fact, $||\mathbf{U}||_2$ is the spectral norm, equal to the square root of the largest eigenvalue of $\mathbf{U}'\mathbf{U}$. In addition, define the Frobenius norm of the $d \times d$ matrix $\mathbf{U} = (U_{ij})$ as
 \begin{equation}
 	||\mathbf{U}||_F = \sqrt{\sum_{i=1}^{d}\sum_{j=1}^{d}|U_{ij}|^2}
 \end{equation}
 
 \noindent
 The asymptotic theory below uses the following assumptions.\\
 (A1) All row vectors of $\mathbf{A}'$ and $\mathbf{\Sigma}_0$ in the factor model (4) satisfy the sparsity condition (13) below. We say that a $d$-dimensional vector $\mathbf{x}=(x_1,\cdots,x_d)'$ is sparse if
 \begin{equation}
 	\sum_{j=1}^{d} |x_i|^{\delta} \le C\pi(d),
 \end{equation}
 where $0 \le \delta <1$, $\pi(d)$ is a deterministic function of $d$ that grows slowly in $d$, such as $\pi(d) = 1$ or $\text{ln}(d)$, and $C$ is a positive constant.\\
 (A2) The factor model (4) has $r$ fixed factors, with $\mathbf{A}'\mathbf{A}=\mathbf{I}_r$, and matrices $\mathbf{\Sigma}_0$ and $\mathbf{\Sigma}_f$ satisfy
 \begin{eqnarray}
 	||\mathbf{\Sigma}_0||_2 &<& \infty \nonumber \\
 	\max_{1 \le t \le T} |\mathbf{\Sigma}_{f,jj}(t)| &=& \mathcal{O}_p(B(T)), \quad j=1,\cdots,r. 
 \end{eqnarray}
 where $1 \le B(T)=o(T)$.\\
 (A3) $\max_{1 \le t \le T}\parallel \mathbf{\hat{\Sigma}}_x(t) - \mathbf{\Sigma}_x(t) \parallel_2 = \mathcal{O}_p(A(d,T,n))$, for some rate function $A(d,T,n)$ such that $A(d,T,n)B^5(T)=o(1)$ with $B(T)$ defined in (A2).\\
 (A4) $A_j$'s and $B_i$'s are such that the CAW model (9) and (10) are stationary and ergodic.\\
 (A5) The parameter set $\Theta$ for all parameters $A_j$'s, $B_i$'s, $C$ and $\nu$ is compact for the CAW model (9) and (10).\\
 (A6) The Hessian matrix $\partial^2\mathcal{L}(\mathbf{\theta})/\partial\theta_i\partial\theta_j$ converges to some deterministic matrix function $D(\mathbf{\theta})$ as $T$ goes to infinity which is of full rank for all $\mathbf{\theta} \in \Theta$.\\
 
 The first two conditions are from \citet{TWYZ11} which are used to prove the consistency of $\mathbf{\hat{\Sigma}}_f(t)$. In this paper, as we are using the threshold MSRVM estimator, we take $A(d,T,n)=\pi(d)[e_n(d^2T)^{1/\beta}]^{1-\delta}\text{ln}T$ and $B(T) = \text{ln}T$, where $e_n \sim n^{-1/4}$, consistent with \citet{TWC13}. In addition, the constant $\beta$ can be taken large enough (under reasonable moment conditions) so that $A(d,T,n)B^5(T)$ will go to $0$ as $n$, $d$, and $T$ go to infinity. Consequently, we assume here these two values shall converge to $0$ in some sense.\\
 
 \noindent
 \begin{thm}
 	Suppose the models (1), (3) and (4) satisfy Conditions (A1)-(A3). Denote the ordered eigenvalues of $\bar{\mathbf{S}}_x$ by $\lambda_1 \ge \cdots \ge \lambda_d$. Let $\mathbf{a}_1, \cdots, \mathbf{a}_r$ be the eigenvectors of $\bar{\mathbf{S}}_x$ corresponding to the $r$ largest eigenvalues $\lambda_1 , \cdots , \lambda_r$. Also set $\hat{\lambda}_1 \ge \cdots \ge \hat{\lambda}_r$ be the $r$ largest eigenvalues of $\bar{\hat{\mathbf{S}}}_x$ and $\mathbf{\hat{a}}_1, \cdots, \mathbf{\hat{a}}_r$ the corresponding eigenvectors. Let $\mathbf{A}=(\mathbf{a}_1, \cdots, \mathbf{a}_r)$ and $\mathbf{\hat{A}}=(\mathbf{\hat{a}}_1, \cdots, \mathbf{\hat{a}}_r)$. As $n$, $d$, $T$ go to infinity, we have
 	\begin{eqnarray}
 		\mathbf{A}'\mathbf{\hat{A}} - \mathbf{I}_r &=& \mathcal{O}_p(A(d,T,n)B(T)), \nonumber \\
 		\mathbf{\hat{\Sigma}}_f(t) - \mathbf{\Sigma}_f(t) - \mathbf{A}'\mathbf{\Sigma}_0\mathbf{A} &=&  \mathcal{O}_p(A^{1/2}(d,T,n)B^{3/2}(T)).
 	\end{eqnarray}
 \end{thm}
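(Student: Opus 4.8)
The plan is to treat the construction of $\hat{\mathbf{A}}$ as an eigenvector-perturbation problem for $\bar{\hat{\mathbf{S}}}_x$ around its population analogue $\bar{\mathbf{S}}_x$, and then to propagate the resulting loading error through the quadratic form defining $\hat{\mathbf{\Sigma}}_f(t)$. The first step is purely algebraic and exploits the factor structure (4). Writing $\mathbf{\Sigma}_x(t)-\bar{\mathbf{\Sigma}}_x = \mathbf{A}(\mathbf{\Sigma}_f(t)-\bar{\mathbf{\Sigma}}_f)\mathbf{A}'$, in which the constant term $\mathbf{\Sigma}_0$ cancels, and using $\mathbf{A}'\mathbf{A}=\mathbf{I}_r$, one obtains $\bar{\mathbf{S}}_x = \mathbf{A}\,\bar{\mathbf{S}}_f\,\mathbf{A}'$ with $\bar{\mathbf{S}}_f = T^{-1}\sum_t(\mathbf{\Sigma}_f(t)-\bar{\mathbf{\Sigma}}_f)^2$. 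Hence $\bar{\mathbf{S}}_x$ has rank $r$, its range is the column space of $\mathbf{A}$, and (after the rotational identification that diagonalises $\bar{\mathbf{S}}_f$) the columns of $\mathbf{A}$ are exactly the leading eigenvectors $\mathbf{a}_1,\dots,\mathbf{a}_r$, while $\lambda_{r+1}=\cdots=\lambda_d=0$. The same computation yields the identity $\mathbf{A}'\mathbf{\Sigma}_x(t)\mathbf{A} = \mathbf{\Sigma}_f(t)+\mathbf{A}'\mathbf{\Sigma}_0\mathbf{A}$, the population target in the second display, which explains why $\hat{\mathbf{\Sigma}}_f(t)$ estimates $\mathbf{\Sigma}_f(t)+\mathbf{A}'\mathbf{\Sigma}_0\mathbf{A}$ rather than $\mathbf{\Sigma}_f(t)$.

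Next I would bound the sample perturbation. Put $\mathbf{E}(t)=\hat{\mathbf{\Sigma}}_x(t)-\mathbf{\Sigma}_x(t)$, so that $\max_t\|\mathbf{E}(t)\|_2 = \mathcal{O}_p(A(d,T,n))$ by (A3). Expanding $\bar{\hat{\mathbf{S}}}_x-\bar{\mathbf{S}}_x$ into the cross terms between the centred increments $\mathbf{\Sigma}_x(t)-\bar{\mathbf{\Sigma}}_x$ and the centred errors $\mathbf{E}(t)-\bar{\mathbf{E}}$, with $\bar{\mathbf{E}}=T^{-1}\sum_t\mathbf{E}(t)$, plus a quadratic error term, and using $\max_t\|\mathbf{\Sigma}_x(t)-\bar{\mathbf{\Sigma}}_x\|_2=\mathcal{O}_p(B(T))$ (which follows from (A2) since $\mathbf{\Sigma}_f(t)$ is positive semidefinite with diagonal of order $B(T)$ and $r$ is fixed), the cross terms dominate the quadratic one and give $\|\bar{\hat{\mathbf{S}}}_x-\bar{\mathbf{S}}_x\|_2=\mathcal{O}_p(A(d,T,n)B(T))$. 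A standard eigenvector-perturbation inequality of Davis--Kahan (sin--theta) type, applied with the gap separating the $r$ nonzero eigenvalues from $0$ and from one another, controls both the principal angles between the leading eigenspaces and the within-eigenspace rotation, and yields the first assertion $\mathbf{A}'\hat{\mathbf{A}}-\mathbf{I}_r=\mathcal{O}_p(A(d,T,n)B(T))$.

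For the second assertion I would first convert this inner-product bound into a bound on the loadings. Fixing signs so that $\mathbf{a}_i'\hat{\mathbf{a}}_i\ge 0$, the identity $\|\hat{\mathbf{a}}_i-\mathbf{a}_i\|_2^2 = 2(1-\mathbf{a}_i'\hat{\mathbf{a}}_i)$, together with the diagonal entries of the first assertion, gives $\|\hat{\mathbf{A}}-\mathbf{A}\|_F = \mathcal{O}_p(A^{1/2}(d,T,n)B^{1/2}(T))$; note the square root here, which is precisely what produces the factor $A^{1/2}$ in the stated rate. I would then write $\hat{\mathbf{\Sigma}}_f(t)-\mathbf{\Sigma}_f(t)-\mathbf{A}'\mathbf{\Sigma}_0\mathbf{A} = \hat{\mathbf{A}}'\hat{\mathbf{\Sigma}}_x(t)\hat{\mathbf{A}}-\mathbf{A}'\mathbf{\Sigma}_x(t)\mathbf{A}$ via the identity of the first paragraph, and split it as $\hat{\mathbf{A}}'\mathbf{E}(t)\hat{\mathbf{A}} + \{\hat{\mathbf{A}}'\mathbf{\Sigma}_x(t)\hat{\mathbf{A}}-\mathbf{A}'\mathbf{\Sigma}_x(t)\mathbf{A}\}$. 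The first piece is $\mathcal{O}_p(A(d,T,n))$ because $\|\hat{\mathbf{A}}\|_2=1$; the second, after inserting $\hat{\mathbf{A}}=\mathbf{A}+(\hat{\mathbf{A}}-\mathbf{A})$, is controlled by $\|\hat{\mathbf{A}}-\mathbf{A}\|_F\,\max_t\|\mathbf{\Sigma}_x(t)\|_2 = \mathcal{O}_p(A^{1/2}(d,T,n)B^{3/2}(T))$, which dominates and gives the claimed rate uniformly in $t$.

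The main obstacle is the eigenvector-perturbation step. The Davis--Kahan argument requires a lower bound on the gap $\lambda_r-\lambda_{r+1}=\lambda_r$ and on the separation among $\lambda_1,\dots,\lambda_r$, i.e. non-degeneracy of the factor second-moment matrix $\bar{\mathbf{S}}_f$; this is implicit in the factor model being genuinely $r$-dimensional and must be made explicit as a positive lower bound on the smallest eigenvalue of $\bar{\mathbf{S}}_f$, after which the two displays follow as the stated upper bounds. A secondary technical point is the rotational identification between the model loading $\mathbf{A}$ and the eigenvectors of $\bar{\mathbf{S}}_x$, together with the choice of eigenvector signs, both of which must be fixed consistently for $\mathbf{A}'\hat{\mathbf{A}}-\mathbf{I}_r$ to be meaningful; the remaining bookkeeping, namely uniformity in $t$ and absorption of the lower-order quadratic terms, is routine given (A2)--(A3).
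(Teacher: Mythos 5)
Your proposal is correct and follows essentially the same route as the paper's own proof: the $\mathcal{O}_p(AB)$ bound on $\|\bar{\hat{\mathbf{S}}}_x-\bar{\mathbf{S}}_x\|_2$ (which the paper imports from Tao et al.\ 2011 rather than re-deriving), a top-$r$ eigenvector perturbation bound (the paper cites Bickel and Levina where you invoke Davis--Kahan, the same tool), the square-root conversion $\|\mathbf{\hat{A}}-\mathbf{A}\|_2^2\le 2\|\mathbf{A}'\mathbf{\hat{A}}-\mathbf{I}_r\|_2$ that produces the $A^{1/2}B^{3/2}$ rate, and the same decomposition of $\mathbf{\hat{\Sigma}}_f(t)-\mathbf{\Sigma}_f(t)-\mathbf{A}'\mathbf{\Sigma}_0\mathbf{A}$ into an $\mathcal{O}_p(A)$ error term plus rotation terms of order $\mathcal{O}_p(A^{1/2}B^{3/2})$. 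The only noteworthy difference is that you explicitly flag the eigengap (non-degeneracy of $\bar{\mathbf{S}}_f$) and the rotational/sign identification of $\mathbf{A}$ with the leading eigenvectors, conditions the paper leaves implicit in its citations.
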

 
 \noindent
 \begin{thm}
 	Suppose that $\mathbf{\hat{\theta}}$ is the maximized log-likelihood estimator of $\mathbf{\theta}$ based on the data $\mathbf{\hat{\Sigma}}_f(t)$ from the CAW model and $\mathbf{\tilde{\theta}}$ is the maximized log-likelihood estimator based on the true data $\mathbf{\tilde{\Sigma}}_f(t)$ from the same CAW model. Then under Conditions (A1)-(A5), 
 	\begin{equation}
 		\mathbf{\hat{\theta}}-\mathbf{\tilde{\theta}} = \mathcal{O}_p(A^{1/2}(d,T,n)B^{5/2}(T))
 	\end{equation}
 \end{thm}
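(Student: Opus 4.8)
The plan is to treat $\mathbf{\hat\theta}$ and $\mathbf{\tilde\theta}$ as two M-estimators solving the \emph{same} estimating equations but fed with different data, and to control their discrepancy through the score (first derivative of $\mathcal{L}$) evaluated at a common point. Write $\hat{\mathcal{L}}(\theta)$ and $\tilde{\mathcal{L}}(\theta)$ for the log-likelihood (11) built respectively from the estimated matrices $\mathbf{\hat\Sigma}_f(t)$ and the true matrices $\mathbf{\tilde\Sigma}_f(t)$, and let $\mathbf{\hat{S}}_f(t)$ and $\mathbf{S}_f(t)$ be the corresponding scaling matrices from the recursion (10). The first-order conditions are $\partial\hat{\mathcal{L}}(\mathbf{\hat\theta})/\partial\theta=0$ and $\partial\tilde{\mathcal{L}}(\mathbf{\tilde\theta})/\partial\theta=0$. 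Adding and subtracting $\partial\tilde{\mathcal{L}}(\mathbf{\hat\theta})/\partial\theta$ and applying a mean-value expansion of $\partial\tilde{\mathcal{L}}/\partial\theta$ around $\mathbf{\tilde\theta}$ yields, for some $\mathbf{\theta}^{*}$ between $\mathbf{\hat\theta}$ and $\mathbf{\tilde\theta}$,
\begin{equation}
\mathbf{\hat\theta}-\mathbf{\tilde\theta} = -\left[\frac{\partial^2\tilde{\mathcal{L}}}{\partial\theta\partial\theta'}(\mathbf{\theta}^{*})\right]^{-1}\left[\frac{\partial\hat{\mathcal{L}}}{\partial\theta}(\mathbf{\hat\theta})-\frac{\partial\tilde{\mathcal{L}}}{\partial\theta}(\mathbf{\hat\theta})\right].
\end{equation}
By (A5)--(A6) the bracketed Hessian converges to the full-rank matrix $D(\mathbf{\theta}^{*})$, so its inverse is $\mathcal{O}_p(1)$. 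The whole problem therefore reduces to bounding the \emph{score difference} at the common value $\theta=\mathbf{\hat\theta}$, a quantity produced solely by the data substitution $\mathbf{\tilde\Sigma}_f\mapsto\mathbf{\hat\Sigma}_f$.

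The basic inputs, all to be established uniformly in $1\le t\le T$, are magnitude and perturbation estimates. Theorem 1 gives $||\mathbf{\hat\Sigma}_f(t)-\mathbf{\tilde\Sigma}_f(t)||_2=\mathcal{O}_p(A^{1/2}(d,T,n)B^{3/2}(T))$; since the recursion (10) is stable (A4) with bounded coefficients (A5), the data error propagates with geometrically decaying weights and the transient from the initial values $\mathbf{S}_f(1),\mathbf{S}_f(2)$ is negligible, so that $||\mathbf{\hat{S}}_f(t)-\mathbf{S}_f(t)||_2=\mathcal{O}_p(A^{1/2}B^{3/2})$ as well. From (A2) and positive definiteness, $||\mathbf{\tilde\Sigma}_f(t)||_2$ and $||\mathbf{S}_f(t)||_2$ are $\mathcal{O}_p(B(T))$, while $\mathbf{S}_f(t)\succeq CC'$ yields $||\mathbf{S}_f(t)^{-1}||_2=\mathcal{O}_p(1)$. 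The single most important estimate is that, because the CAW specification (9) makes $\mathbf{S}_f(t)$ the conditional mean of $\mathbf{\tilde\Sigma}_f(t)$, the normalized matrix $\mathbf{S}_f(t)^{-1}\mathbf{\tilde\Sigma}_f(t)$ concentrates around $\mathbf{I}_r$ and is only $\mathcal{O}_p(1)$, rather than the $\mathcal{O}_p(B)$ coming from a naive product bound.

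Next I would differentiate the per-period log-likelihood, writing $\partial_k$ for $\partial/\partial\theta_k$. For a recursion parameter (a diagonal entry of some $A_j$, $B_i$ or $C$) the score is a sum of two families of terms, $\text{tr}(\mathbf{S}_f^{-1}\partial_k\mathbf{S}_f)$ and $\text{tr}(\mathbf{S}_f^{-1}(\partial_k\mathbf{S}_f)\mathbf{S}_f^{-1}\mathbf{\tilde\Sigma}_f)$, where $\partial_k\mathbf{S}_f=\mathcal{O}_p(B)$ by the same propagation argument applied to (10), and the hat/tilde difference of $\partial_k\mathbf{S}_f$ is $\mathcal{O}_p(A^{1/2}B^{3/2})$. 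Keeping the block $\mathbf{S}_f^{-1}\mathbf{\tilde\Sigma}_f$ together as $\mathcal{O}_p(1)$ and expanding each term to first order in the three perturbations $\delta(\mathbf{S}_f^{-1})$, $\delta(\partial_k\mathbf{S}_f)$ and $\delta(\mathbf{S}_f^{-1}\mathbf{\tilde\Sigma}_f)$ (each of order $\mathcal{O}_p(A^{1/2}B^{3/2})$), every surviving contribution is $\mathcal{O}_p(A^{1/2}B^{5/2})$; the $\nu$-derivative, involving only log-determinant and trace differences, is of the smaller order $\mathcal{O}_p(A^{1/2}B^{3/2})$. Averaging over $t$ preserves these uniform bounds, so the score difference is $\mathcal{O}_p(A^{1/2}B^{5/2})$; substituting into the displayed identity and using that the Hessian inverse is $\mathcal{O}_p(1)$ gives the claimed rate.

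The main obstacle is precisely the quadratic term $\text{tr}(\mathbf{S}_f^{-1}(\partial_k\mathbf{S}_f)\mathbf{S}_f^{-1}\mathbf{\tilde\Sigma}_f)$: treating $\mathbf{S}_f^{-1}$ and $\mathbf{\tilde\Sigma}_f$ as independent factors of orders $\mathcal{O}_p(1)$ and $\mathcal{O}_p(B)$ and then perturbing $\mathbf{S}_f^{-1}$ produces a spurious $\mathcal{O}_p(A^{1/2}B^{7/2})$ contribution and loses the theorem's rate. The resolution, and the step I would spend the most care on, is to exploit the conditional Wishart structure so that $\mathbf{S}_f^{-1}\mathbf{\tilde\Sigma}_f=\mathbf{I}_r+\mathcal{O}_p(\cdot)$ is handled as a single $\mathcal{O}_p(1)$ block, pairing the perturbation $\delta(\mathbf{S}_f^{-1})=-\mathbf{S}_f^{-1}(\delta\mathbf{S}_f)\mathbf{S}_f^{-1}$ with it as $-\mathbf{S}_f^{-1}(\delta\mathbf{S}_f)(\mathbf{S}_f^{-1}\mathbf{\tilde\Sigma}_f)=\mathcal{O}_p(A^{1/2}B^{3/2})$. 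A secondary technical point is justifying the uniform-in-$t$ validity of all estimates and the negligibility of the initial-condition transient in $\tfrac{1}{T}\sum_t$, both of which follow from the geometric ergodicity in (A4).
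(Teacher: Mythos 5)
Your proposal is correct in outline and rests on the same M-estimation skeleton as the paper's proof: first-order conditions for both estimators, an expansion of the true-data score around $\mathbf{\tilde{\theta}}$ (the paper Taylor-expands at $\mathbf{\tilde{\theta}}$ with an $o_p(\mathbf{\hat{\theta}}-\mathbf{\tilde{\theta}})$ remainder; you use a mean-value point $\mathbf{\theta}^{*}$), and invertibility of the limiting Hessian via (A6) --- which, note, the paper's own proof also invokes even though the theorem cites only (A1)--(A5). The genuine difference is how the score discrepancy is bounded. The paper is indirect: its Lemma 3 bounds the likelihood difference $\mathcal{\hat{L}}(\theta)-\mathcal{L}(\theta)=\mathcal{O}_p(A^{1/2}B^{5/2})$ using only naive norm bounds (the $B^{5/2}$ arises as $\|\mathbf{\hat{S}}_f-\mathbf{S}_f\|_2\,\|\mathbf{\tilde{\Sigma}}_f\|_2=B^{3/2}\cdot B$), and its Lemma 2 then asserts that $\theta$-differentiation preserves this order because the difference allegedly factors into $\theta$-free data rates times smooth bounded functions of $\theta$. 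You instead differentiate first and bound the score difference term by term. That is more laborious but also more honest: it forces you to confront the quadratic term $\mathrm{tr}\bigl(\mathbf{S}_f^{-1}(\delta\mathbf{S}_f)\mathbf{S}_f^{-1}(\partial_k\mathbf{S}_f)\mathbf{S}_f^{-1}\mathbf{\tilde{\Sigma}}_f\bigr)$, with $\delta\mathbf{S}_f:=\mathbf{\hat{S}}_f-\mathbf{S}_f$, for which naive bounds give only $\mathcal{O}_p(A^{1/2}B^{7/2})$; the paper's factorization claim silently skips exactly this point, so your Wishart-concentration device supplies a justification that the paper does not actually spell out. In this sense your route buys rigor at the paper's weakest step, at the cost of a longer computation.

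One repair is needed in your key estimate. The claim $\|\mathbf{S}_f^{-1}\mathbf{\tilde{\Sigma}}_f\|_2=\mathcal{O}_p(1)$ is not quite right: $\mathbf{S}_f^{-1}\mathbf{\tilde{\Sigma}}_f$ is only \emph{similar} to the standardized Wishart matrix $\mathbf{W}=\mathbf{S}_f^{-1/2}\mathbf{\tilde{\Sigma}}_f\mathbf{S}_f^{-1/2}\sim\mathcal{W}_r(\nu,\mathbf{I}_r/\nu)$, and conjugation by $\mathbf{S}_f^{1/2}$ can inflate the spectral norm by the condition number of $\mathbf{S}_f^{1/2}$, i.e.\ by $\mathcal{O}_p(B^{1/2})$, so the product bound you propose would only yield $\mathcal{O}_p(A^{1/2}B^{3})$. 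The fix is to cycle the trace so that both inverse factors sandwich $\mathbf{\tilde{\Sigma}}_f$: the block that then appears is $\mathbf{S}_f^{-1}\mathbf{\tilde{\Sigma}}_f\mathbf{S}_f^{-1}=\mathbf{S}_f^{-1/2}\mathbf{W}\mathbf{S}_f^{-1/2}$, whose norm is at most $w^{-1}\|\mathbf{W}\|_2=\mathcal{O}_p(1)$ since $\lambda_{\min}(\mathbf{S}_f)\ge w>0$ uniformly on $\Theta$. With that substitution the problematic term is bounded by $\mathcal{O}_p(A^{1/2}B^{3/2})\cdot\mathcal{O}_p(B)\cdot\mathcal{O}_p(1)=\mathcal{O}_p(A^{1/2}B^{5/2})$, and the rest of your argument (working with the time average rather than a maximum, so that $T^{-1}\sum_{t}\|\mathbf{W}_t\|_2=\mathcal{O}_p(1)$ suffices) goes through as planned.
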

 
\section{Data analysis 1}
\noindent
We apply the proposed methodology to two datasets. In this section, we focus on comparing the VAR and CAW models, using 5-minutes intraday data of 30 stocks traded in the US stock market over a period of 103 days where the data are obtained directly from Bloomberg.\\

\subsection{Data description}
\noindent
We use $30$ stocks traded at the New York Stock Exchange, which consist of 27 components of Dow Jones Index : 3M (MMM), American Express (AXP), AT$\&$T (T), Boeing (BA), Caterpillar (CAT), Chevron (CVX), Coca-Cola (KO), Dupont (DD), ExxonMobil (XOM), General Electric (GE), Goldman Sachs (GS), The Home Depot (HD), IBM (IBM), Johnson $\&$ Johnson (JNJ), JPMorgan Chase (JPM), McDonald's (MCD), Merck (MRK), Nike (NKE), Pfizer (PFE), Procter $\&$ Gamble (PG), Travelers (TRV), UnitedHealth Group (UNH), United Technologies (UTX), Verizon (VZ), Visa (V), Wal-Mart (WMT), Walt Disney (DIS) and three former components of Dow Jones Index: Honeywell (HON), Citigroup (C) and American International Group (AIG). The daily realized covariance matrix is computed as $\mathbf{\hat{\Sigma}}_x(t)=\sum_{j=1}^{M} y_{t,j}y_{t,j}'$, where $y_{t,j}$ is the vector of log-returns for the $30$ stocks computed for the $j$th 5-minute interval of trading day $t$ except the first and last half hour. The sample period starts at May 3, 2013 and ends on September 30, 2013, totally 103 trading days. (Here, we exclude July 4 due to the incompleteness of data). This generates a series of 103 matrices of $\mathbf{\hat{\Sigma}}_x(t)$, which are 30 by 30.

Descriptive statistics of the realized variances and covariances are provided in Table $\ref{t1}$. We only show some entries due to limited space. The following properties are found:
\begin{enumerate}
	\item Among 30 realized variances and 435 covariances, only 3 realized covariances are skewed to the left rather than to the right.
	\item All realized variances and covariances have bigger kurtosis than that of the normal distribution except for 2 realized covariances.
\end{enumerate}

For the next two subsections, we show the estimated results for both the diagonal CAW and VAR models when the first $k=98$ days are treated as data points.\\

\subsection{Model fitting}
\noindent
The eigenvalues of the sample variance matrix $\mathbf{\bar{\hat{S}}}_x$ are evaluated, and are shown in Figure $\ref{f1}$. The plots show that the three largest eigenvalues are much larger than the others, which indicates that the factor number $r=3$ is appropriate.

Let $\mathbf{\hat{A}}$ be the eigenvector of $\mathbf{\bar{\hat{S}}}_x$ corresponding to the three largest eigenvalues. We then calculate the factor volatility matrices $\mathbf{\hat{\Sigma}}_f(t)$, which are 3 by 3. Figure $\ref{f2}$ shows time series plots of the variances and covariances of the factor covariance matrices. 

Then, we fit the diagonal CAW model to the matrix series. Different orders are used to make comparison, namely $(p,q)=(0,1)$, $(p,q)=(1,1)$, $(p,q)=(1,2)$, $(p,q)=(2,1)$ and $(p,q)=(2,2)$. \\

\begin{remark}
	The initial value for each numerical optimization is chosen randomly and we repeat 160 times to choose the one with the largest log-likelihood value.
\end{remark}

\subsection{VAR model}
\noindent
For comparison purpose, we also fit the VAR model, advocated in \citet{TWYZ11}, to the vectorized factor covariance matrix $\mathbf{\hat{\Sigma}}_f(t)$, which is a vector with 6 entries. Using the package "vars" in R, we select the VAR(1) model as all the model selection criteria, namely Akaike information criterion (AIC), Hannan-Quinn (HQ), Schwarz criterion (SC) and final prediction error (FPE), choose the order 1, as shown in Table $\ref{t2}$. 

The model is
\begin{displaymath}
	\text{vech}\{\tilde{\mathbf{\Sigma}}_f(t)\} = \mathbf{A}_0 + \mathbf{A}_1 \text{vech}\{\tilde{\mathbf{\Sigma}}_f(t-1)\} + \mathbf{e}(t)
\end{displaymath}
where $\mathbf{A}_0$ is a $6$-dimensional vector, $\mathbf{A}_1$ is a $6 \times 6$ square matrix, and $\mathbf{e}(t)$ is a $6$-dimensional vector white noise process with zero mean and finite fourth moments. Both $\mathbf{A}_0$ and $\mathbf{A}_1$ are estimated by the least squares method.

The fitted VAR(1) model has the coefficients:\\

\noindent
$\hat{\mathbf{A}}_0 = 10^{-5}
\begin{pmatrix}
54.49 \\
-1.482 \\
4.937 \\
7.632 \\
0.164 \\
10.27 
\end{pmatrix}$, \quad
$\hat{\mathbf{A}}_1 = 10^{-2}
\begin{pmatrix}
55.8 & -31.2 & 19.6 & -37.4 & -199.1 & -8.0 \\
-4.1 & 20.6 & 3.2 & 7.7 & 45.4 & -2.0 \\
6.8 & 15.3 & 72.8 & -6.5 & -230.9 & -0.1 \\
0.5 & -26.5 & -84.6 & 4.4 & 283.8 & -6.0 \\
-0.6 & 2.5 & 2.2 & 1.0 & -3.7 & -0.3 \\
10.1 & 43.5 & -17.4 & 11.1 & -207.5 & 13.4
\end{pmatrix}$\\

\noindent
The fitted models are shown in Figure $\ref{f3}$ which indicates that the VAR(1) provides adequate fit to the data. For each of the 6 time series, we show the in-sample fit in the upper panel, where the solid line stands for the real data and the dashed line is the fitted series. The residuals are shown in the middle panel while the ACF and PACF of residuals are displayed in the lower panel. The ACF and PACF plots show that the residuals are time-uncorrelated.\\

\subsection{Performance comparison in out-of-sample forecasting}
\noindent
We compare the out-of-sample one-day-ahead forecast performance of the diagonal CAW and VAR models. The one-day-ahead realized covariance is calculated by: 1. Obtain the one-day-ahead factor covariance matrix by conditional expectation; 2. Plug the forecast factor covariance matrix into the factor model (4) to get the predicted realized covariance matrix.

The predictive accuracy is measured with both the Frobenius norm and the spectral norm. We take the first $k$ days as data and forecast the next day, where $k=80,\cdots,98$. Every model is re-estimated and the new forecasts are generated based upon the new parameter estimates. Then, we take the average of errors during 19 periods to do the comparison. Moreover, the errors of the inverse of matrices are also compared.

Table $\ref{t3}$ contains the results of the prediction accuracy of the two models using different norms. The main findings are as follows.
\begin{enumerate}
	\item The CAW models with order $(p,q)=(1,1)$ performs the best as it has the smallest error under both Frobenius and spectral norms. In general, All CAW models have similar performance except the ones with order $(p,q)=(0,1)$.
	\item The CAW models have similar performance with that of the VAR(1) model except the one with order $(p,q)=(0,1)$.
	\item However, the diagonal CAW model needs far less parameters than the VAR model does. Here, the best CAW model with order $(p,q)=(1,1)$ only needs 10 parameters, just a quarter of the number of parameters that the VAR(1) model needs.
	\item As $d$ goes up, we can imagine that $r$ will become larger; in this situation, the parameters that we need for the diagonal CAW model will be even far less than that we need for the VAR model.
\end{enumerate}

\begin{remark}
	\noindent	
	\begin{enumerate}
		\item We do predictions for 2 to 5 days ahead in addition to the one-day forecast. We find that the performance of predictions for longer horizons is similar with that for the one day, in general.
		\item One problem for the VAR model is that it cannot assure that the predicted factor covariance matrix is positive definite. We have checked that in this data analysis, the predicted factor covariance matrices are actually all positive definite. This may be due to the reason that the estimated factor variances are larger than the covariances (in absolute value).
	\end{enumerate}
\end{remark}

\section{Data analysis 2}
\subsection{Data description}
\noindent
We use the same 30 stocks as in the previous section. The raw tick-by-tick trading data are downloaded from TAQ database of Wharton Research Data Service. The data period starts at January 3, 2012 and ends on December 31, 2012, with totally 250 trading days.

We firstly conduct data cleaning with the procedures introduced in \citet{BG06} and \citet{BHLS09}. The steps are the following:\\
\begin{enumerate}
	\item Delete entries with a time stamp outside 9:30 am - 4:00 pm when the exchange is open.
	\item Delete entries with a time stamp inside 9:30 - 10:00 am or 3:30 - 4:00 pm to eliminate the open and end effect of price fluctuation.
	\item Delete entries with the transaction price equal to zero.
	\item If multiple transactions have the same time stamp, use the median price.
	\item Delete entries with prices which are outliers. Let $\{p_i\}_{i=1}^N$ be an ordered tick-by-tick price series. We treat the $i$-th price as an outlier if $|p_i-\bar{p}_i(k)|>3s_i(k)$, where $\bar{p}_i(k)$ and $s_i(k)$ denote the sample mean and sample standard deviation of a neighborhood of $k$ observations around $i$, respectively. For the beginning prices which may not have enough left hand side neighbors, we get $k-i$ neighbors from $i+1$ to $k+1$. Similar procedures are taken for the ending prices.   
\end{enumerate}
Then, we construct the threshold MSRVM estimator based on the cleaned tick-by-tick data following the steps in \citet{TWC13}. We set the threshold to be $5\%$ of the largest of the absolute value of entries in the matrix.

Descriptive statistics of selected realized variances and covariances are provided in Table $\ref{t4}$. In addition, two plots for realized variances and two plots for realized covariances are shown in Figure $\ref{f4}$. We find the following properties:
\begin{enumerate}
	\item All 30 realized variances and 435 covariances are skewed to the right, with mean skewness $1.39$.
	\item All realized variances and covariances have bigger kurtosis than that of the normal distribution, with mean kurtosis $5.92$ showing fat tails.
	\item The realized variances and covariances have significant fluctuations during the year, indicated by the graphs.
\end{enumerate}

For the next two subsections, we show the estimated results for both the diagonal CAW and VAR models when the first $k=220$ days are treated as data points.\\

\subsection{Model fitting}
\noindent
The eigenvalues of the sample variance matrix $\mathbf{\bar{\hat{S}}}_x$ are evaluated, and are shown in Figure $\ref{f5}$. We choose four factors for the model, as there is a dicernable drop between the fourth and the fifth eigenvalues, though the second to fourth eigenvalues are much less than the biggest one.

Let $\mathbf{\hat{A}}$ be the eigenvectors of $\mathbf{\bar{\hat{S}}}_x$ corresponding to the four largest eigenvalues. We calculate the factor volatility matrices $\mathbf{\hat{\Sigma}}_f(t)$, which are 4 by 4. Then, we fit the diagonal CAW model to the matrix series. Different orders are used to make comparison, namely $(p,q)=(0,1)$, $(p,q)=(1,1)$, $(p,q)=(1,2)$, $(p,q)=(2,1)$ and $(p,q)=(2,2)$. For every estimation, we randomly choose 60 initial values for each optimization of the log-likelihood and choose the one with the largest log-likelihood value to give the estimated parameters.\\

\subsection{VAR model}
\noindent
For comparison purpose, we also fit the VAR model to the vectorized factor covariance matrix $\mathbf{\hat{\Sigma}}_f(t)$, which is a vector with 10 entries. Again, using the package "vars" in R, we select the VAR(1) model as all the model selection criteria AIC, HQ, SC and FPE choose the order 1, as shown in Table $\ref{t5}$.

The model is
\begin{displaymath}
	\text{vech}\{\tilde{\mathbf{\Sigma}}_f(t)\} = \mathbf{A}_0 + \mathbf{A}_1 \text{vech}\{\tilde{\mathbf{\Sigma}}_f(t-1)\} + \mathbf{e}(t)
\end{displaymath}
where $\mathbf{A}_0$ is a $10$-dimensional vector, $\mathbf{A}_1$ is a $10 \times 10$ square matrix, and $\mathbf{e}(t)$ is a $10$-dimensional vector white noise process with zero mean and finite fourth moments. Both $\mathbf{A}_0$ and $\mathbf{A}_1$ are estimated by the least squares method.

We denote the estimator of $\mathbf{A}_1$ by $\mathbf{\hat{A}}_1$. We find that $|\mathbf{\hat{A}}_1| = -1.03 \times 10^{-7}$ and the biggest absolute eigenvalues is $0.4906 < 1$ which ensures the stationarity of the VAR model.

In addition, the sparsity of $\mathbf{\hat{A}}_1$ is checked. We compute the values $\frac{1}{10^2}\sum_{i,j} |a_{i,j}|^m$ for different $0 \le m < 1$, where $\mathbf{\hat{A}}_1 = \{a_{i,j}\}_{1 \le i,j \le 10}$. The following is the result. We can see from Table $\ref{t6}$ that the average absolute value of the entries of $\mathbf{\hat{A}}_1$ is small (less than 1) so that we can consider $\mathbf{\hat{A}}_1$ as almost sparse.\\

\subsection{Performance comparison in out-of-sample forecasting}
\noindent
We compare the out-of-sample one-day-ahead forecast performance of the diagonal CAW and VAR models. The prediction of the one-day-ahead realized covariance is calculated by: 1. Predict the one-day-ahead factor covariance matrix by conditional expectation; 2. Plug the forecast factor covariance matrix into the factor model (4) to get the predicted realized covariance matrix.

The predictive accuracy is measured with both the Frobenius norm and the spectral norm. We take the first $k$ days as data and forecast the next day, where $k=220,\cdots,240$. Every model is re-estimated and the new forecasts are generated based upon the new parameter estimates. Then, we take the average of errors during 21 periods to do the comparison. Moreover, the errors of the inverse of matrices are also compared.

Table $\ref{t7}$ contains the results of the prediction error of two models using different norms. The main findings are as follows.
\begin{enumerate}
	\item The diagonal CAW models with order $(p,q)=(1,1)$ performs the best among the CAW models as it has the smallest error under both Frobenius and spectral norms. In general, all CAW models have similar performance except the ones with order $(p,q)=(0,1)$. The result also indicates the possibility of over-parameterization with orders larger than $(1,1)$.
	\item The diagonal CAW models have slightly worse, but comparable performance with that of the VAR(1) model except the one with order $(p,q)=(0,1)$.
	\item The diagonal CAW model needs far less parameters than the VAR model does. Here, the best CAW model with order $(p,q)=(1,1)$ only needs 13 parameters, nearly a tenth of the number of parameters that VAR(1) model needs.
	\item We do predictions for 2 to 5 days ahead in addition to one-day forecast. We find that the performance of predictions of longer horizons is similar with that for the one day, in general.
\end{enumerate}

\section{Conclusions}
\noindent
In the literature, most models dealing with the realized covariance matrix focus on small number of assets, which become infeasible when the dimension is large. In order to solve the problem, we propose a factor model with diagonal CAW model fitted to the factor covariance matrix. Our model performs comparably with the VAR model while requiring far less parameters. For example, in the second data analysis, the CAW model with order $(p,q)=(1,1)$ performs similarly with the VAR model, measured both in the Frobenius norm and the spectral norm, but only needs nearly one tenth of the number of parameters of the latter. In addition, the model ensures the positive definiteness for the predicted covariance matrices. Diagnostic tests for the proposed model is worth considering in a future study.\\

\newpage

\begin{table}[H]
	\centering
	\caption{Descriptive Statistics for Selected Realized Variances and Covariances \label{t1}}
	\caption*{We report the descriptive statistics of the realized variances and covariances of the first dataset, namely mean, maximum, minimum, standard deviation, skewness and kurtosis. We only show some entries due to limited space.}
	\begin{tabular}{|l*{6}{c}|}
		\hline
		Stock              & Mean & Maximum & Minimum & SD & Skewness  & Kurtosis  \\
		& $*10^{-5}$ & $*10^{-4}$ & $*10^{-5}$ & $*10^{-5}$ & & \\
		\hline
		& \multicolumn{6}{c|}{Realized Variance} \\
		\hline
		MMM & 3.25 & 1.27 & 0.63 & 1.96 & 1.85 & 4.84   \\
		AXP & 6.90 & 6.05 & 2.03 & 7.02 & 4.74 & 31.0   \\
		T   & 4.85 & 1.36 & 1.29 & 2.60 & 1.15 & 0.85   \\
		BA  & 8.60 & 28.5 & 1.31 & 27.8 & 9.61 & 92.8   \\
		CAT & 6.39 & 3.70 & 1.99 & 4.82 & 4.43 & 23.8   \\
		\hline
		& \multicolumn{6}{c|}{Realized Covariance} \\
		\hline
		MMM-AXP & 2.17 & 1.23 & -0.25 & 2.03 & 2.13 & 6.27   \\
		MMM-T   & 1.40 & 0.77 & -0.37 & 1.48 & 1.86 & 3.80   \\
		MMM-BA  & 1.94 & 0.95 & -2.33 & 1.85 & 1.36 & 2.67   \\
		MMM-CAT & 2.11 & 1.08 & -0.03 & 1.72 & 2.23 & 7.01   \\
		AXP-T   & 1.96 & 0.96 & -0.66 & 1.95 & 1.70 & 3.15   \\
		AXP-BA  & 2.18 & 1.17 & -7.27 & 2.39 & 0.45 & 3.36   \\
		AXP-CAT & 2.24 & 1.14 & -0.69 & 2.11 & 1.80 & 4.37   \\
		T-BA    & 1.38 & 0.86 & -1.47 & 1.59 & 1.95 & 5.33   \\
		T-CAT   & 1.41 & 0.84 & -1.13 & 1.61 & 1.80 & 3.95   \\
		BA-CAT  & 2.12 & 0.98 & -0.32 & 1.76 & 1.65 & 3.98   \\ 
		\hline		
	\end{tabular}
	
\end{table}

\newpage
\begin{table}[H]
	\centering
	\caption{The Selection of the Order of VAR Model \label{t2}}
	\caption*{We fit the VAR model to the vectorized factor covariance matrix $\mathbf{\hat{\Sigma}}_f(t)$, which is a vector with 6 entries. Using the package "vars" in R, we select the VAR(1) model by Akaike information criterion (AIC), Hannan Quinn (HQ), Schwarz criterion (SC) and final prediction error (FPE). }
	\begin{tabular}{|l*{5}{c}|}
		\hline
		Order             & 1 & 2 & 3 & 4  & 5  \\
		\hline
		AIC(n) $*10^{2}$ & -1.107 & -1.102 & -1.098 & -1.094 & -1.090    \\
		HQ(n) $*10^{2}$ & -1.102 & -1.093 & -1.084 & -1.077 & -1.068 \\
		SC(n) $*10^{2}$ & -1.094 & -1.079 & -1.065 & -1.051 & -1.036   \\
		FPE(n) $*10^{-48}$ & 0.830 & 1.313 & 2.181 & 3.268 & 5.772   \\
		\hline
	\end{tabular}
\end{table}

\newpage
\begin{table}[H]
	\centering
	\caption{Forecast errors for CAW and VAR models using different norms  \label{t3}}
	\caption*{We report the results of the prediction accuracy of the two models using different norms. In addition, the prediction accuracy of the inverse of the matrices is shown as well. Here, FN is for the Frobenius norm and SN for the spectral norm. }
	\begin{normalsize}
		\begin{tabular}{|l*{5}{c}|}
			\hline
			& \multicolumn{5}{c|}{CAW} \\
			\hline
			Order& Number of Parameters & FN & SN & FN & SN \\
			&& & &  for Inverse &  for Inverse\\
			&& $*10^{-4}$ &  $*10^{-4}$ &  $*10^{5}$ &  $*10^{5}$ \\
			\hline
			$p=0,q=1$ & 7 & 6.21 & 5.66 & 6.30 & 3.84 \\
			$p=1,q=1$ & 10 & 4.77 & 3.95 & 6.31 & 3.84  \\
			$p=1,q=2$ & 13 & 4.92 & 4.01 & 6.31 & 3.84   \\
			$p=2,q=1$ & 13 & 4.96 & 4.01 & 6.31 & 3.84  \\
			$p=2,q=2$ & 16 & 4.98 & 4.03 & 6.31 & 3.84 \\
			\hline
			& \multicolumn{5}{c|}{VAR} \\
			\hline
			& Number of Parameters & FN & SN & FN & SN \\
			&& & &  for Inverse &  for Inverse\\
			&& $*10^{-4}$ &  $*10^{-4}$ &  $*10^{5}$ &  $*10^{5}$ \\
			\hline
			VAR(1) & 42 & 4.81 & 3.99 & 6.31 & 3.84 \\
			\hline
		\end{tabular}
	\end{normalsize}
	
\end{table}

\newpage
\begin{table}[H]
	\centering
	\caption{Descriptive Statistics for Some Selected Realized Variances and Covariances \label{t4}}
	\caption*{We report the descriptive statistics of the realized variances and covariances of the second dataset, namely mean, maximum, minimum, standard deviation, skewness and kurtosis. We only show some entries due to limited space.}
	\begin{tabular}{|l*{6}{c}|}
		\hline
		Stock              & Mean & Maximum & Minimum & SD & Skewness  & Kurtosis  \\
		& $*10^{-5}$ & $*10^{-4}$ & $*10^{-5}$ & $*10^{-5}$ & & \\
		\hline
		& \multicolumn{6}{c|}{Realized Variance} \\
		\hline
		AIG & 17.4 & 9.16 & 3.05 & 11.2 & 2.47 & 13.1   \\
		AXP & 6.12 & 6.38 & 1.45 & 4.71 & 7.67 & 91.5   \\
		BA  & 5.46 & 2.19 & 1.11 & 3.25 & 1.70 & 7.33   \\
		C   & 16.7 & 9.62 & 2.80 & 10.9 & 2.56 & 15.3   \\
		\hline
		& \multicolumn{6}{c|}{Realized Covariance} \\
		\hline
		AIG-AXP & 4.30 & 1.65 & -1.99 & 3.13 & 1.00 & 3.76   \\
		AIG-BA  & 3.32 & 1.32 & -2.74 & 2.82 & 1.03 & 3.93   \\
		AIG-C   & 7.72 & 4.54 & -1.47 & 5.99 & 1.85 & 9.37   \\
		AXP-BA  & 2.41 & 1.01 & -0.59 & 1.95 & 1.26 & 4.37   \\
		AXP-C   & 5.03 & 2.24 & -0.75 & 3.53 & 1.61 & 6.84   \\
		BA-C    & 3.69 & 1.75 & -1.40 & 3.13 & 1.66 & 6.30   \\
		\hline
	\end{tabular}
\end{table}

\newpage
\begin{table}[H]
	\centering
	\caption{The Selection of the Order of VAR Model \label{t5}}
	\caption*{We fit the VAR model to the vectorized factor covariance matrix $\mathbf{\hat{\Sigma}}_f(t)$, which is a vector with 10 entries. Using the package "vars" in R, we select the VAR(1) model as all the model selection criteria, namely Akaike information criterion (AIC), Hannan Quinn (HQ), Schwarz criterion (SC) and final prediction error (FPE), choose the order 1.}
	\begin{tabular}{|l*{5}{c}|}
		\hline
		Oder             & 1 & 2 & 3 & 4  & 5  \\
		\hline
		AIC(n) $*10^{2}$ & -1.949 & -1.943 & -1.937 & -1.933 & -1.932    \\
		HQ(n) $*10^{2}$ & -1.939 & -1.925 & -1.910 & -1.898 & -1.888 \\
		SC(n) $*10^{2}$ & -1.924 & -1.897 & -1.871 & -1.846 & -1.824   \\
		FPE(n) $*10^{-84}$ & 0.234 & 0.416 & 0.796 & 1.292 & 1.751   \\
		\hline
	\end{tabular}
\end{table} 

\newpage
\begin{table}[H]
	\centering
	\caption{Sparsity of $\mathbf{\hat{A}}_1$ \label{t6}}
	\caption*{We compute the values $\frac{1}{10^2}\sum_{i,j} |a_{i,j}|^m$ for different $0 \le m < 1$, where $\mathbf{\hat{A}}_1 = \{a_{i,j}\}_{1 \le i,j \le 10}$.}
	\begin{normalsize}
		\begin{tabular}{|l*{7}{c}|}
			\hline
			$m$ & $0$ & $0.05$ & $0.1$ & $0.15$ & $0.2$ & $0.25$ & $0.3$ \\
			\hline
			$\frac{1}{10^2}\sum_{i,j} |a_{i,j}|^m$ & $1$ & $0.8686$ & $0.7586$ & $0.666$ & $0.5877$ & $0.5212$ & $0.4646$ \\
			\hline
		\end{tabular}
	\end{normalsize}
\end{table}

\newpage
\begin{table}[H]
	\centering
	\caption{Forecast errors for CAW and VAR models using different norms  \label{t7}}
	\caption*{We report the results of the prediction accuracy of the two models using different norms. In addition, the prediction accuracy of the inverse of the matrices is shown as well. Here, FN is for the Frobenius norm and SN for the spectral norm.}
	\begin{normalsize}
		\begin{tabular}{|l*{5}{c}|}
			\hline
			& \multicolumn{5}{c|}{CAW} \\
			\hline
			Order& Number of Parameters & FN & SN & FN & SN \\
			&& & &  for Inverse &  for Inverse\\
			&& $*10^{-4}$ &  $*10^{-4}$ &  $*10^{5}$ &  $*10^{5}$ \\
			\hline
			$p=0,q=1$ & 9 & 6.10 & 5.58 & 7.30 & 4.31 \\
			$p=1,q=1$ & 13 & 5.27 & 4.80 & 7.31 & 4.31  \\
			$p=1,q=2$ & 17 & 5.28 & 4.81 & 7.31 & 4.31   \\
			$p=2,q=1$ & 17 & 5.28 & 4.83 & 7.31 & 4.31 \\
			$p=2,q=2$ & 21 & 5.28 & 4.82 & 7.31 & 4.31 \\
			\hline
			& \multicolumn{5}{c|}{VAR} \\
			\hline
			& Number of Parameters & FN & SN & FN & SN \\
			&& & &  for Inverse &  for Inverse\\
			&& $*10^{-4}$ &  $*10^{-4}$ &  $*10^{5}$ &  $*10^{5}$ \\
			\hline
			VAR(1) & 110 & 5.18 & 4.76 & 7.31 & 4.30 \\
			\hline
		\end{tabular}
	\end{normalsize}
	
\end{table}

\newpage
\begin{figure}[H]
	\centering
	\includegraphics[width=15cm]{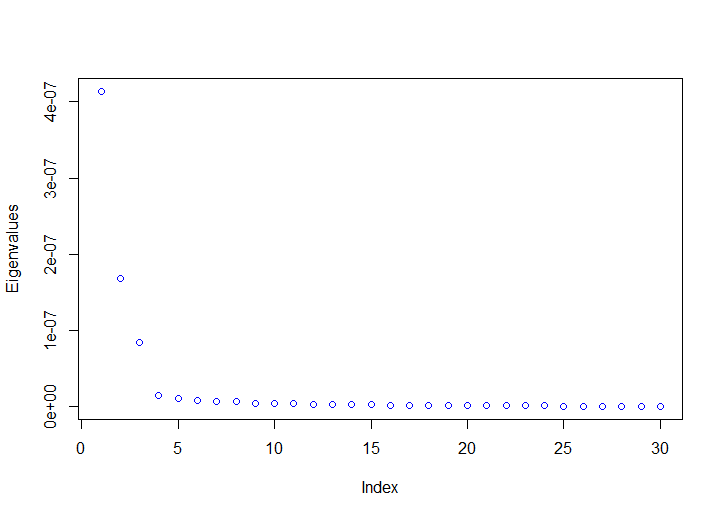}
	\caption{Plots of the eigenvalues of $\mathbf{\bar{\hat{S}}}_x$ for the dataset when $k=98$. }
	\label{f1}
\end{figure}

\newpage
\begin{figure}[H]
	\centering
	
	\includegraphics[width=16cm]{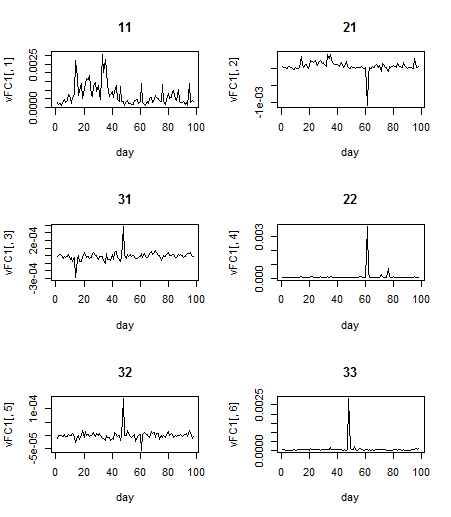}
	\caption{Time series of variances and covariances for factor covariance matrices. }
	\label{f2}

\end{figure}

\newpage

\begin{figure}[H]
	\centering
	\subfloat{\includegraphics[width=11.5cm]{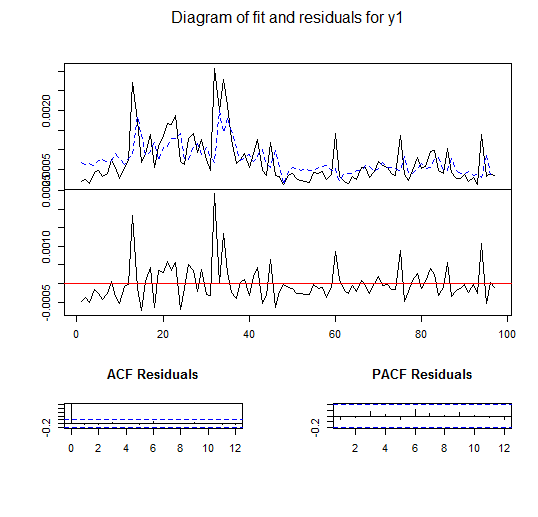}}\,
	\subfloat{\includegraphics[width=11.5cm]{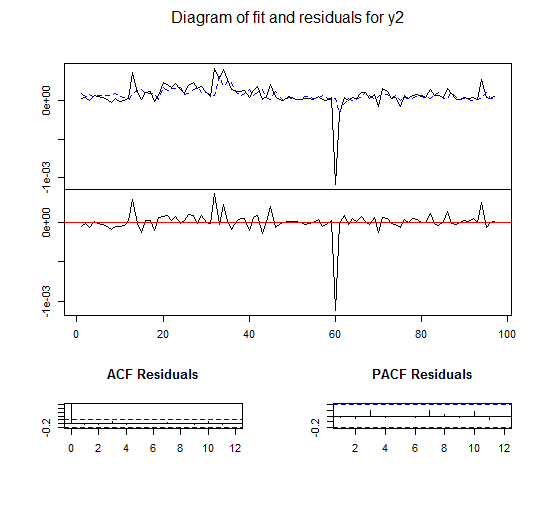}}	
\end{figure}

\begin{figure}[H]
	\ContinuedFloat 
	\centering
	\subfloat{\includegraphics[width=11.5cm]{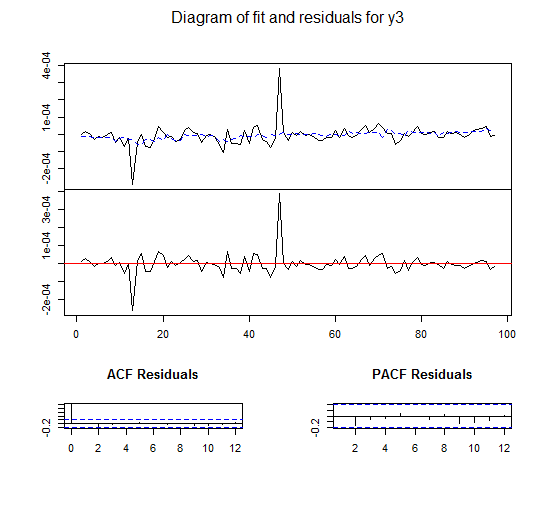}}\,
	\subfloat{\includegraphics[width=11.5cm]{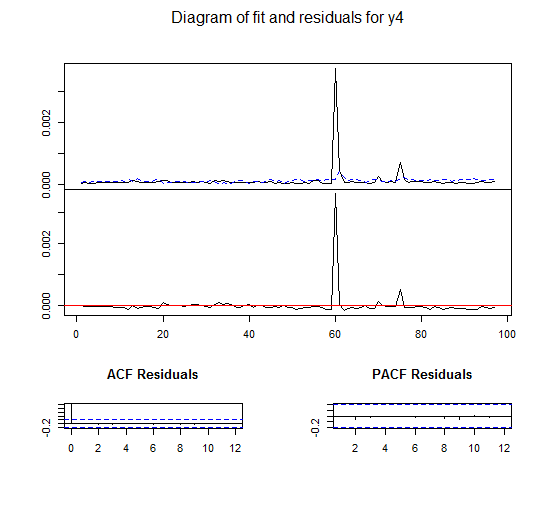}}
\end{figure}

\setcounter{figure}{4}
\begin{figure}[H]
	\ContinuedFloat
	\centering
	\subfloat{\includegraphics[width=11.5cm]{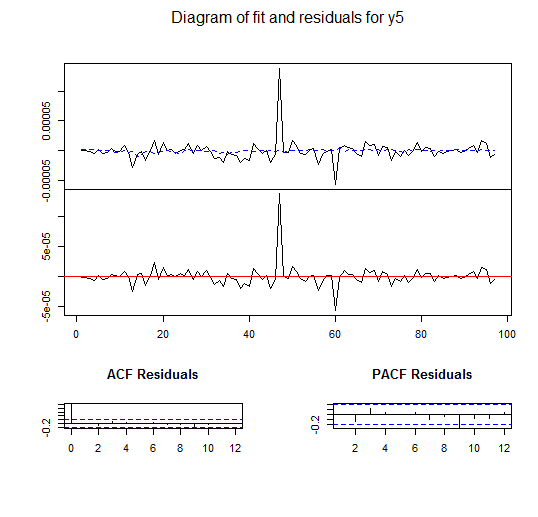}}\,
	\subfloat{\includegraphics[width=11.5cm]{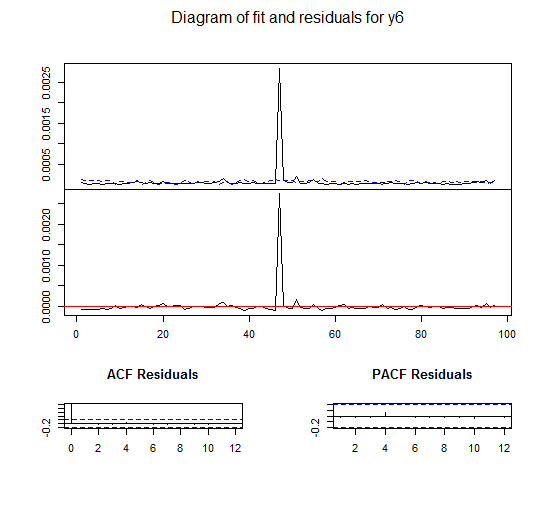}}
	\caption{The plot of fitted VAR(1) model, residuals of fitted VAR(1) model,  and ACF and PACF of the residuals. }
	\label{f3}

\end{figure}

\newpage
\setcounter{figure}{5}
\begin{figure}[H]
	\centering
	\includegraphics[width=15cm]{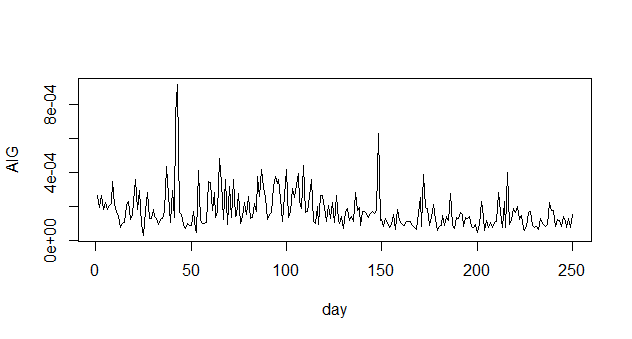}\,
	\includegraphics[width=15cm]{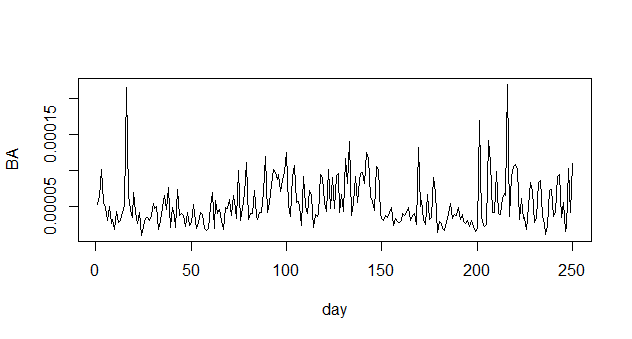}
\end{figure}
\begin{figure}[H] 
	\ContinuedFloat
	\centering   
	\includegraphics[width=15cm]{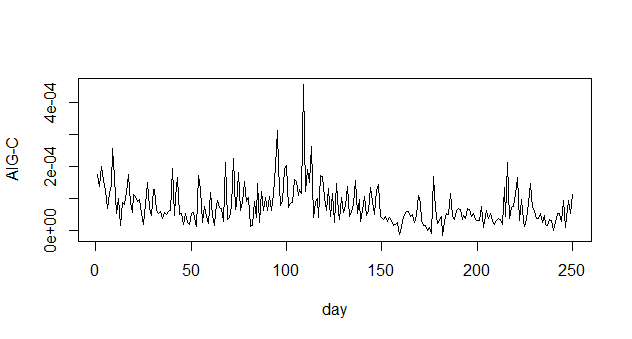}\,
	\includegraphics[width=15cm]{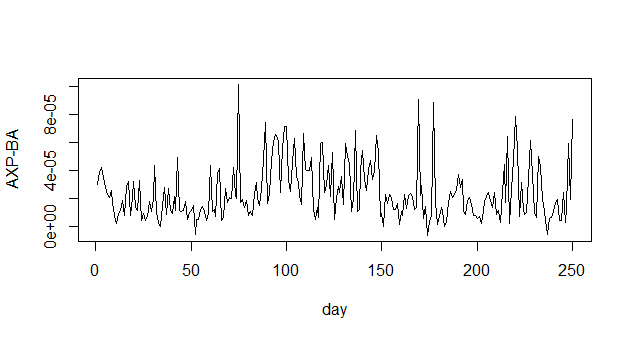}
	\caption{The plot of selected realized variances and covariances. }
	\label{f4}

\end{figure}

\newpage
\setcounter{figure}{4}
\begin{figure}[H]
	\centering
	
	\includegraphics[width=15cm]{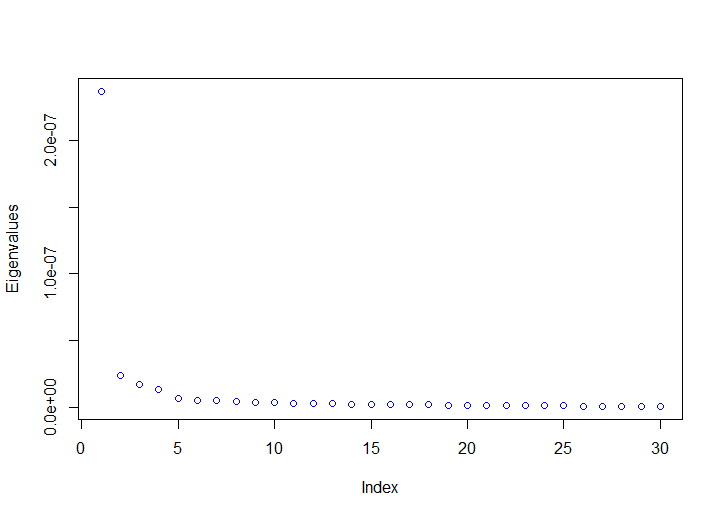}	
	\caption{Plots of the eigenvalues of $\mathbf{\bar{\hat{S}}}_x$ for the dataset when $k=220$.}
	\label{f5}
\end{figure}

\newpage

\appendix
\section{Proof of theorems}

\noindent
For convenience, we denote $A(d,T,n)$ and $B(T)$ by $A$ and $B$ in the proof part.\\

\noindent
\begin{proof}[Proof of Theorem 1]
	Following Theorem 1 in \citet{TWYZ11}, we can easily show that $\parallel \mathbf{\bar{\hat{S}}}_x - \bar{\mathbf{S}}_x \parallel_2 = \mathcal{O}_p(AB)$.\\
	
	\noindent
	Then, we claim that 
	\begin{equation}
		max_{1\le j \le r}\mid \hat{\lambda}_j - \lambda_j \mid = \mathcal{O}_p(AB)
	\end{equation}
	\begin{equation}
		max_{1\le j \le r}\parallel \hat{\mathbf{a}}_j - \mathbf{a}_j \parallel_2 = \mathcal{O}_p(AB)
	\end{equation}
	\noindent
	Let $P = \mathbf{\bar{\hat{S}}}_x - \bar{\mathbf{S}}_x$ with ordered eigenvalues $\rho_1 \ge \cdots \ge \rho_d$. Then, we have $\rho_d \le \hat{\lambda}_j - \lambda_j \le \rho_1$. As a result,
	\begin{equation}
		\mid \hat{\lambda}_j - \lambda_j \mid \le max(\mid \rho_1 \mid, \mid \rho_d \mid) = \parallel \mathbf{\bar{\hat{S}}}_x - \bar{\mathbf{S}}_x \parallel_2
	\end{equation}
	which proves equation (A.1). The equation (A.2) follows from Theorem 1 in \citet{BL08a} and the same argument in the proof of Theorem 5 in the same paper.\\
	
	\noindent
	For the $j$-th diagonal entry of $\mathbf{A}'\mathbf{\hat{A}} - \mathbf{I}_r$,
	\begin{equation}
		\mathbf{a}_j'\hat{\mathbf{a}}_j-1 = -\parallel \hat{\mathbf{a}}_j - \mathbf{a}_j \parallel_2^2/2 = \mathcal{O}_p(A^2B^2) \le \mathcal{O}_p(AB)
	\end{equation}
	as we assume $AB$ goes to zero. For off-diagonal entry $(k,j)$ $(k \ne j)$,
	\begin{eqnarray}
		\mid \mathbf{a}_k'\hat{\mathbf{a}}_j \mid &=& \mid \mathbf{a}_k'(\hat{\mathbf{a}}_j-\mathbf{a}_j) \mid \nonumber \\
		&\le&  \parallel \mathbf{a}_k' \parallel_2 \parallel \hat{\mathbf{a}}_j - \mathbf{a}_j \parallel_2 = \parallel \hat{\mathbf{a}}_j - \mathbf{a}_j \parallel_2 \nonumber \\
		&=& \mathcal{O}_p(AB) 
	\end{eqnarray}
	which proves the first result.\\
	
	\noindent
	To prove the second result, we separate the left hand side of the equation into three parts:
	\begin{eqnarray}
		&& \mathbf{\hat{\Sigma}}_f(t) - \mathbf{\Sigma}_f(t) - \mathbf{A}'\mathbf{\Sigma}_0\mathbf{A} \nonumber \nonumber \\
		&=& \mathbf{\hat{A}}'[\mathbf{\hat{\Sigma}}_x(t)-\mathbf{\Sigma}_x(t)]\mathbf{\hat{A}} + \mathbf{\hat{A}}'\mathbf{\Sigma}_x(t)\mathbf{\hat{A}} - \mathbf{\Sigma}_f(t) - \mathbf{A}'\mathbf{\Sigma}_0\mathbf{A} \nonumber \\
		&=& \mathbf{\hat{A}}'[\mathbf{\hat{\Sigma}}_x(t)-\mathbf{\Sigma}_x(t)]\mathbf{\hat{A}} + [(\mathbf{A}'\mathbf{\hat{A}})'\mathbf{\Sigma}_f(t) \mathbf{A}'\mathbf{\hat{A}} - \mathbf{\Sigma}_f(t)] \nonumber \\
		&& + [\mathbf{\hat{A}}'\mathbf{\Sigma}_0\mathbf{\hat{A}} - \mathbf{A}'\mathbf{\Sigma}_0\mathbf{A}] 
	\end{eqnarray}.
	
	\noindent
	For the first term on the right-hand side of equation (A.6), since
	\begin{equation}
		\parallel \mathbf{\hat{A}}' \parallel_2^2,\parallel \mathbf{\hat{A}} \parallel_2^2 =1,
	\end{equation}
	we have
	\begin{eqnarray}
		\parallel \mathbf{\hat{A}}'[\mathbf{\hat{\Sigma}}_x(t)-\mathbf{\Sigma}_x(t)]\mathbf{\hat{A}} \parallel_2 &\le& \parallel \mathbf{\hat{A}}'\parallel_2 \parallel \mathbf{\hat{\Sigma}}_x(t)-\mathbf{\Sigma}_x(t)\parallel_2 \parallel \mathbf{\hat{A}} \parallel_2 \nonumber \\
		&=& \mathcal{O}_p(A) 
	\end{eqnarray}
	
	\noindent
	For the second term, from Condition (A2), we know that $\parallel \mathbf{\Sigma}_f(t) \parallel_2 = \mathcal{O}_p(B)$, and we have
	\begin{eqnarray}
		\parallel \mathbf{\hat{A}}-\mathbf{A}\parallel_2^2 &=& \parallel (\mathbf{\hat{A}}'-\mathbf{A}') (\mathbf{\hat{A}}-\mathbf{A})\parallel_2 \nonumber \\
		&\le& 2\parallel \mathbf{A}'\mathbf{\hat{A}}-\mathbf{I}_r\parallel_2 = \mathcal{O}_p(AB) 
	\end{eqnarray}
	As a result,
	\begin{eqnarray}
		\parallel (\mathbf{A}'\mathbf{\hat{A}})'\mathbf{\Sigma}_f(t) \mathbf{A}'\mathbf{\hat{A}} - \mathbf{\Sigma}_f(t) \parallel_2 &=& \parallel \mathbf{\hat{A}}'\mathbf{A}\mathbf{\Sigma}_f(t) \mathbf{A}'\mathbf{\hat{A}} - \mathbf{\hat{A}}'\mathbf{\hat{A}}\mathbf{\Sigma}_f(t) \mathbf{\hat{A}}'\mathbf{\hat{A}} \parallel_2 \nonumber \\
		&\le& \parallel \mathbf{\hat{A}}' \parallel_2 \parallel \mathbf{A}\mathbf{\Sigma}_f(t) \mathbf{A}' - \mathbf{\hat{A}}\mathbf{\Sigma}_f(t)\mathbf{\hat{A}}' \parallel_2 \parallel \mathbf{\hat{A}} \parallel_2 \nonumber \\
		&\le& \parallel (\mathbf{\hat{A}}-\mathbf{A})\mathbf{\Sigma}_f(t) \mathbf{\hat{A}}' + \mathbf{A}\mathbf{\Sigma}_f(t) (\mathbf{\hat{A}}-\mathbf{A})' \parallel_2 \nonumber \\
		&\le&  \parallel \mathbf{\hat{A}}-\mathbf{A} \parallel_2 \parallel \mathbf{\Sigma}_f(t)\parallel_2 (\parallel \mathbf{\hat{A}} \parallel_2+\parallel \mathbf{A} \parallel_2) \nonumber \\
		&=& \mathcal{O}_p(A^{1/2}B^{3/2}) 
	\end{eqnarray}
	
	\noindent
	For the third term, again from Condition (A2), we know that $\parallel \mathbf{\Sigma}_0 \parallel_2$ is bounded, therefore,
	\begin{eqnarray}
		\parallel \mathbf{\hat{A}}'\mathbf{\Sigma}_0 \mathbf{\hat{A}} - \mathbf{A}'\mathbf{\Sigma}_0\mathbf{A} \parallel_2 &=& \parallel (\mathbf{\hat{A}}-\mathbf{A})'\mathbf{\Sigma}_0 \mathbf{\hat{A}} + \mathbf{A}'\mathbf{\Sigma}_0 (\mathbf{\hat{A}}-\mathbf{A}) \parallel_2 \nonumber \\
		&\le& \parallel (\mathbf{\hat{A}}-\mathbf{A})' \parallel_2 \parallel \mathbf{\Sigma}_0 \parallel_2 \parallel \mathbf{\hat{A}} \parallel_2 + \parallel \mathbf{A}' \parallel_2 \parallel \mathbf{\Sigma}_0 \parallel_2 \parallel (\mathbf{\hat{A}}-\mathbf{A}) \parallel_2 \nonumber \\
		&=& \parallel \mathbf{\hat{A}}-\mathbf{A} \parallel_2 \parallel \mathbf{\Sigma}_0\parallel_2 (\parallel \mathbf{\hat{A}} \parallel_2+\parallel \mathbf{A} \parallel_2) \nonumber \\
		&=& \mathcal{O}_p(A^{1/2}B^{1/2}) 
	\end{eqnarray}
	
	\noindent
	From equations (A.8), (A.10) and (A.11), we conclude that 
	\begin{equation}
		\mathbf{\hat{\Sigma}}_f(t) - \mathbf{\Sigma}_f(t) - \mathbf{A}'\mathbf{\Sigma_0}\mathbf{A} = \mathcal{O}_p(A^{1/2}B^{3/2}).
	\end{equation}
	
\end{proof}

\noindent
We need the following three lemmas to prove Theorem 2.\\

\noindent
\begin{lem} $\partial\mathcal{L}(\mathbf{\theta})/\partial\theta_i$ and $ \partial^2\mathcal{L}(\mathbf{\theta})/\partial\theta_i\partial\theta_j$ are uniformly bounded for $\theta \in \Theta$.\\
\end{lem}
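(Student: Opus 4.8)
The plan is to differentiate the log-likelihood (11) explicitly and then dominate each resulting term by a stationary, integrable random variable, uniformly over the compact set $\Theta$. First I would record the matrix-calculus identities that generate the score and Hessian: writing $\dot{\mathbf{S}}_f^{(i)}(t) := \partial \mathbf{S}_f(t)/\partial\theta_i$, one has $\partial \ln|\mathbf{S}_f(t)|/\partial\theta_i = \text{tr}(\mathbf{S}_f(t)^{-1}\dot{\mathbf{S}}_f^{(i)}(t))$, $\partial \mathbf{S}_f(t)^{-1}/\partial\theta_i = -\mathbf{S}_f(t)^{-1}\dot{\mathbf{S}}_f^{(i)}(t)\mathbf{S}_f(t)^{-1}$, and the corresponding expressions for the trace term $\text{tr}(\nu\mathbf{S}_f(t)^{-1}\mathbf{\tilde{\Sigma}}_f(t))$; the $\nu$-derivatives of the normalizing constant produce the digamma and trigamma functions $\psi$ and $\psi'$. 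Consequently every entry of the score and Hessian is a finite sum of products of (a) continuous scalar functions of $\nu$, and (b) the matrices $\mathbf{S}_f(t)^{-1}$, $\dot{\mathbf{S}}_f^{(i)}(t)$, $\mathbf{\tilde{\Sigma}}_f(t)$, together with their second-order analogues.

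Next I would control each of these factors uniformly over $\Theta$. The scalar functions of $\nu$ are bounded because $\Theta$ is compact (A5) and the admissible $\nu$ stay in a compact subinterval of $(r-1,\infty)$ on which $\psi$ and $\psi'$ are smooth. For the operator-norm bounds, the key observation is that the diagonal recursion (10) gives $\mathbf{S}_f(t) \succeq CC'$, so the positivity constraints on $\text{diag}(C)$ together with compactness of $\Theta$ yield a uniform lower bound on the smallest eigenvalue of $\mathbf{S}_f(t)$, i.e. $\sup_{\theta\in\Theta}\|\mathbf{S}_f(t)^{-1}\|_2 \le c_0^{-2}$ for some $c_0>0$. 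An upper bound on $\|\mathbf{S}_f(t)\|_2$ and on $\|\mathbf{\tilde{\Sigma}}_f(t)\|_2$ comes from stationarity and ergodicity (A4): $\mathbf{S}_f(t)$ is the stationary solution of (10), so both norms are stationary and possess finite moments inherited from the Wishart law (9).

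The remaining ingredient is the derivative processes. Differentiating the recursion (10) with respect to $\theta_i$ shows that $\dot{\mathbf{S}}_f^{(i)}(t)$ obeys a linear recursion with the same autoregressive coefficients $B_i$ as $\mathbf{S}_f(t)$, driven by the lagged $\mathbf{S}_f$ and $\mathbf{\tilde{\Sigma}}_f$; the second derivatives solve a recursion of the same type. Hence the stability that (A4) imposes on the $B_i$'s — precisely what makes $\mathbf{S}_f(t)$ stationary — simultaneously makes $\dot{\mathbf{S}}_f^{(i)}(t)$ and its second-order counterparts stationary with finite moments. Multiplying the bounded factors together and taking $\sup_{\theta\in\Theta}$ then dominates each entry of the score and Hessian at time $t$ by an integrable stationary random variable, so the time average $\frac{1}{T}\sum_{t=1}^{T}(\cdot)$ is $\mathcal{O}_p(1)$ uniformly over $\Theta$ by the ergodic theorem.

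I expect the main obstacle to be making these bounds genuinely \emph{uniform} over $\Theta$ rather than pointwise in $\theta$. The derivative recursions have infinite memory, so the supremum over $\Theta$ of the stationary-solution norms must be shown finite; this requires the contraction governing (10) to hold uniformly on the compact parameter set, which is exactly where (A4) and (A5) must be combined. A secondary delicate point is the uniform lower bound on the eigenvalues of $\mathbf{S}_f(t)$: it relies on the constraint set keeping $\text{diag}(C)$ bounded away from zero, so that $\mathbf{S}_f(t)^{-1}$ cannot blow up as $\theta$ approaches the boundary of $\Theta$.
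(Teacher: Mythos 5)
Your proof is correct in substance, but it takes a genuinely different route from the paper's. The paper's own proof is a pure smoothness-plus-compactness argument: given the initial values $\mathbf{S}_f(0),\dots,\mathbf{S}_f(-p+1)$ and treating the data as fixed, the recursion (10) makes $\mathbf{S}_f(t)$ a polynomial in the entries of $C$, the $B_i$'s and the $A_j$'s; since $\mathbf{S}_f(t)\ge CC'>0$, the maps $(\cdot)^{-1}$, $\ln|\cdot|$, $\mathrm{tr}(\cdot)$ and $\Gamma(\cdot)$ are smooth on the relevant range, hence $\mathcal{L}\in\mathcal{C}^{\infty}$ in $\theta$, and its first and second partial derivatives, being continuous on the compact set $\Theta$ of (A5), are bounded there --- no explicit differentiation, no stationarity, and no ergodic theorem appear at all. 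You instead compute the score and Hessian explicitly and dominate each term by a stationary integrable envelope, invoking (A4) and the ergodic theorem. What your route buys is a strictly stronger conclusion: the paper's bound is realization-dependent (uniform in $\theta$, but silent about its behavior as $T\to\infty$), whereas your envelope argument makes the bound $\mathcal{O}_p(1)$ in $T$ as well, which is what is implicitly needed when Lemma 1 is fed into the $\mathcal{O}_p$ computations of Lemma 2 and Theorem 2. What it costs is extra hypotheses you should state honestly: strict stationarity and ergodicity in (A4) do not by themselves give finite \emph{unconditional} moments of $\mathbf{\tilde{\Sigma}}_f(t)$ (conditional Wishart moments are finite, but integrable envelopes require the usual persistence condition on the $A_j$'s and $B_i$'s), and your uniform lower bound on the eigenvalues of $\mathbf{S}_f(t)$ requires $\Theta$ to keep $\mathrm{diag}(C)$ bounded away from zero --- a point the paper's proof also relies on (via $CC'>0$) but leaves equally implicit.
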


\begin{proof}[Proof]
	$\mathbf{S}_f(t) \ge CC' > 0$, so that $\mathbf{S}_f(t)^{-1}>0$. Given $\mathbf{S}_f(0)$, $\mathbf{S}_f(-1)$, $\cdots$, $\mathbf{S}_f(-p+1)$, $\mathbf{S}_f(t)$ is a polynomial of $A_{mn,j},B_{mn,i},C_{mn}$'s, for $1 \le m \le r$ and $1 \le n \le r$, i.e. $\mathbf{S}_f(t) \in \mathcal{C}^{\infty}$. As a result, $\mathbf{S}_f(t)^{-1} \in \mathcal{C}^{\infty}$. We have $\text{ln}(\cdot)$, $\Gamma(\cdot)$ and $\text{tr}(\cdot)$ are all $\in \mathcal{C}^{\infty}$, which indicates $\mathcal{L}(\theta) \in \mathcal{C}^{\infty}$, i.e.  $\partial\mathcal{L}(\mathbf{\theta})/\partial\theta_i$, $ \partial^2\mathcal{L}(\mathbf{\theta})/\partial\theta_i\partial\theta_j \in \mathcal{C}^{\infty}$. Since $\theta \in \Theta$ which is compact, we have the conclusion that $\partial\mathcal{L}(\mathbf{\theta})/\partial\theta_i$ and $ \partial^2\mathcal{L}(\mathbf{\theta})/\partial\theta_i\partial\theta_j$ are bounded on $\Theta$.
\end{proof}. 

\noindent
\begin{lem} If we denote $K(\theta,\mathbf{\tilde{\Sigma}}_f(t)):=\mathcal{\hat{L}}(\theta) - \mathcal{L}(\theta) = \mathcal{O}_p(A^{1/2}B^{5/2})$, then we have 
	\begin{equation}
		\frac{\partial K(\theta,\mathbf{\tilde{\Sigma}}_f(t))}{\partial\theta_i} = \mathcal{O}_p(A^{1/2}B^{5/2}).
	\end{equation}
\end{lem}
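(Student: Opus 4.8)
The plan is to exploit that $\mathcal{\hat{L}}$ and $\mathcal{L}$ are the \emph{same} smooth functional of the parameter, evaluated at two different data paths. Writing $\Phi(\theta;\{\mathbf{\Sigma}(t)\})$ for the log-likelihood (11) with a generic positive-definite sequence $\{\mathbf{\Sigma}(t)\}$ substituted into both the explicit terms and the recursion (10) that generates the scaling matrices, we have $\mathcal{\hat{L}}(\theta)=\Phi(\theta;\{\mathbf{\hat{\Sigma}}_f(t)\})$ and $\mathcal{L}(\theta)=\Phi(\theta;\{\mathbf{\tilde{\Sigma}}_f(t)\})$; write $\mathbf{\hat{S}}_f(t)$ for the scaling matrix produced by (10) from $\{\mathbf{\hat{\Sigma}}_f(t)\}$. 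Because neither data path depends on $\theta$, differentiation in $\theta_i$ commutes with the substitution of data, so
\[
\frac{\partial K}{\partial\theta_i}=\Big(\frac{\partial\Phi}{\partial\theta_i}\Big)(\theta;\{\mathbf{\hat{\Sigma}}_f(t)\})-\Big(\frac{\partial\Phi}{\partial\theta_i}\Big)(\theta;\{\mathbf{\tilde{\Sigma}}_f(t)\}).
\]
Thus $\partial K/\partial\theta_i$ has exactly the same structure as $K$ itself: it is the increment of one fixed smooth functional --- now the score $\partial\Phi/\partial\theta_i$ rather than $\Phi$ --- between the estimated and the true data path. The task therefore reduces to re-running, for the score, the order bookkeeping that yields $K=\mathcal{O}_p(A^{1/2}B^{5/2})$.

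First I would record the ingredients that drive the rate. By Theorem 1, $\mathbf{\hat{\Sigma}}_f(t)-\mathbf{\tilde{\Sigma}}_f(t)=\mathcal{O}_p(A^{1/2}B^{3/2})$ uniformly in $t$; since the recursion (10) is \emph{linear} in the data and, under stationarity (A4) with $\theta$ in the compact set $\Theta$ of (A5), unrolls into a geometrically weighted sum, the induced discrepancy $\mathbf{\hat{S}}_f(t)-\mathbf{S}_f(t)$ is again $\mathcal{O}_p(A^{1/2}B^{3/2})$. The magnitudes are controlled by $\parallel\mathbf{\tilde{\Sigma}}_f(t)\parallel_2=\mathcal{O}_p(B)$ from (A2), by $\mathbf{S}_f(t)\ge CC'>0$ so that $\parallel\mathbf{S}_f(t)^{-1}\parallel_2$ and $\parallel\mathbf{\hat{S}}_f(t)^{-1}\parallel_2$ are $\mathcal{O}_p(1)$ as in Lemma 1, and by the smoothness established there, which makes every $\theta$-derivative of $\mathbf{S}_f(t)$ and $\mathbf{S}_f(t)^{-1}$ a bounded expression on $\Theta$ up to the data magnitude $\mathcal{O}_p(B)$.

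Next I would expand the difference on the right above by the elementary product rule for differences, replacing $(\mathbf{\hat{S}}_f,\mathbf{\hat{S}}_f^{-1},\mathbf{\hat{\Sigma}}_f)$ by $(\mathbf{S}_f,\mathbf{S}_f^{-1},\mathbf{\tilde{\Sigma}}_f)$ one factor at a time, using $\partial_{\theta_i}\mathbf{S}_f^{-1}=-\mathbf{S}_f^{-1}(\partial_{\theta_i}\mathbf{S}_f)\mathbf{S}_f^{-1}$ and $\partial_{\theta_i}\text{ln}|\mathbf{S}_f|=\text{tr}(\mathbf{S}_f^{-1}\partial_{\theta_i}\mathbf{S}_f)$. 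Each resulting term then carries exactly one difference factor of order $A^{1/2}B^{3/2}$ together with a bounded number of magnitude factors, and the outer $\frac{1}{T}\sum_t$ together with the geometric decay of the unrolled recursion keeps the summation finite, so that the collection of terms is $\mathcal{O}_p(A^{1/2}B^{3/2})$ up to a fixed power of $B$.

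The hard part will be pinning this fixed power of $B$ down to the claimed $\mathcal{O}_p(A^{1/2}B^{5/2})$. Differentiating the score --- which already contains the data once through $\text{tr}(\mathbf{S}_f^{-1}\mathbf{\tilde{\Sigma}}_f)$ --- with respect to a coefficient entry brings in $\partial_{\theta_i}\mathbf{S}_f=\mathcal{O}_p(B)$, and the inverse-derivative identity introduces a further $\mathbf{S}_f^{-1}(\partial_{\theta_i}\mathbf{S}_f)\mathbf{S}_f^{-1}$ factor; I must verify that these extra $\mathcal{O}_p(B)$ factors combine with the difference $\mathcal{O}_p(A^{1/2}B^{3/2})$ and with $\parallel\mathbf{\tilde{\Sigma}}_f(t)\parallel_2=\mathcal{O}_p(B)$ to reproduce exactly $\mathcal{O}_p(A^{1/2}B^{5/2})$ and no larger power. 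This is precisely where the boundedness of $\mathbf{S}_f^{-1}$ and the uniform control over the compact $\Theta$ must be used most tightly, and checking that no single term overshoots the target rate is the crux of the argument.
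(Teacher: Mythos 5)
Your reduction of $\partial K/\partial\theta_i$ to the increment of the score between the two data paths is sound, and your inventory of ingredients --- $\mathbf{\hat{\Sigma}}_f(t)-\mathbf{\tilde{\Sigma}}_f(t)=\mathcal{O}_p(A^{1/2}B^{3/2})$, the induced bound $\mathbf{\hat{S}}_f(t)-\mathbf{S}_f(t)=\mathcal{O}_p(A^{1/2}B^{3/2})$ via linearity of (10), the magnitudes $\mathcal{O}_p(B)$, and the $\mathcal{O}_p(1)$ control of $\mathbf{S}_f(t)^{-1}$ from $\mathbf{S}_f(t)\ge CC'$ --- is exactly the paper's Lemma 3 toolkit. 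But the step you yourself flag as the crux is a genuine gap, and with the bounds you state it cannot be closed by term counting. Writing the score (for a coefficient parameter) as $\frac{\nu}{2}\mathrm{tr}\bigl(\mathbf{S}_f^{-1}(\partial_{\theta_i}\mathbf{S}_f)\mathbf{S}_f^{-1}(\mathbf{\tilde{\Sigma}}_f-\mathbf{S}_f)\bigr)$ and telescoping one factor at a time, you encounter terms such as $\mathrm{tr}\bigl((\mathbf{\hat{S}}_f^{-1}-\mathbf{S}_f^{-1})(\partial_{\theta_i}\mathbf{\hat{S}}_f)\mathbf{\hat{S}}_f^{-1}\mathbf{\hat{\Sigma}}_f\bigr)$: the single difference factor is $\mathcal{O}_p(A^{1/2}B^{3/2})$, but it is multiplied by \emph{two} magnitude factors, $\partial_{\theta_i}\mathbf{\hat{S}}_f=\mathcal{O}_p(B)$ and $\mathbf{\hat{\Sigma}}_f=\mathcal{O}_p(B)$, while the inverses are only $\mathcal{O}_p(1)$, not $\mathcal{O}_p(B^{-1})$ (the assumptions give a lower bound $CC'$ on $\mathbf{S}_f$, not a matching upper--lower scaling). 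That term is therefore $\mathcal{O}_p(A^{1/2}B^{7/2})$, overshooting the target by a full factor of $B$; the extra $B$ comes precisely from $\partial_{\theta_i}\mathbf{S}_f$, a factor absent from $K$ itself, which is why Lemma 3's bookkeeping stops at $B^{5/2}$ but its naive differentiation does not. Nothing in your outline supplies a cancellation that removes it.

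The paper's proof avoids this by a structural rather than computational device: it asserts that $K$ can be organized as $C_1(p,n)+F(\theta)+\sum C_2(p,n)\,G(\theta)$, where the stochastic coefficients $C_1,C_2$ are \emph{free of $\theta$} and already carry the whole rate ($C_2=\mathcal{O}_p(A^{1/2}B^{5/2})$), while $F,G$ depend on $\theta$ alone and are $\mathcal{C}^{\infty}$ with derivatives bounded on the compact $\Theta$ by Lemma 1. Under that separation, $\partial_{\theta_i}$ never acts on a data-dependent factor, so no additional power of $B$ can be generated and the rate passes through differentiation verbatim. Whether one finds the paper's ``straightforward calculations'' fully convincing or not, this separation of the data-dependence from the $\theta$-dependence is the idea your proposal is missing; without it (or some substitute, e.g.\ a sharpened bound $\|\mathbf{S}_f^{-1}\partial_{\theta_i}\mathbf{S}_f\|_2=\mathcal{O}_p(1)$, which the stated assumptions do not provide), re-running the bookkeeping at the level of the score cannot reproduce $\mathcal{O}_p(A^{1/2}B^{5/2})$.
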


\begin{proof}[Proof.]
	By straightforward calculations, it can be shown that $K(\theta,\mathbf{\tilde{\Sigma}}_f(t))$ can be split into:
	\begin{equation}
		K(\theta,\mathbf{\tilde{\Sigma}}_f(t)) = C_1(p,n) + F(\theta) + \sum C_2(p,n) G(\theta)
	\end{equation} 
	where $C_1(p,n)=\mathcal{O}_p(A)$ and $C_2(p,n)=\mathcal{O}_p(A^{1/2}B^{5/2})$ which are free of $\theta$, while $F(\cdot)$, $G(\cdot) \in \mathcal{C}^{\infty}$, according to Lemma 1. In particular, all the derivatives of $F$ and $G$ are bounded on $\Theta$. As a consequence, 
	\begin{eqnarray}
		\frac{\partial K(\theta,\mathbf{\tilde{\Sigma}}_f(t))}{\partial\theta_i} &=& \frac{\partial F(\theta)}{\partial \theta_i} + \sum C_2(p,n) \frac{\partial G(\theta)}{\partial \theta_i} \nonumber \\
		&=& \mathcal{O}_p(A^{1/2}B^{5/2}).
	\end{eqnarray}
\end{proof} 

\noindent
\begin{lem} The log-likelihood function for the CAW model given observed data $\mathbf{\hat{\Sigma}}_f(t)$ and true data $\mathbf{\tilde{\Sigma}}_f(t)$ are 
	\begin{eqnarray}
		\mathcal{\hat{L}}(\theta) &=& \frac{1}{T}\sum_{t=1}^{T}\{-\frac{\nu r}{2}\text{ln}(2) - \frac{r(r-1)}{4}\text{ln}(\pi)-\sum_{i=1}^{r}\text{ln}\Gamma(\frac{\nu+1-i}{2}) -\frac{\nu}{2} \text{ln}|\frac{\mathbf{\hat{S}}_f(t)}{\nu}| \nonumber \\
		&&  + (\frac{\nu-r-1}{2})\text{ln}|\mathbf{\hat{\Sigma}}_f(t)|-\frac{1}{2}\text{tr}(\nu\mathbf{\hat{S}}_f(t)^{-1}\mathbf{\hat{\Sigma}}_f(t))\}
	\end{eqnarray}
	and
	\begin{eqnarray}
		\mathcal{L}(\theta) &=& \frac{1}{T}\sum_{t=1}^{T}\{-\frac{\nu r}{2}\text{ln}(2) - \frac{r(r-1)}{4}\text{ln}(\pi)-\sum_{i=1}^{r}\text{ln}\Gamma(\frac{\nu+1-i}{2}) -\frac{\nu}{2} \text{ln}|\frac{\mathbf{S}_f(t)}{\nu}| \nonumber \\
		&&  + (\frac{\nu-r-1}{2})\text{ln}|\mathbf{\tilde{\Sigma}}_f(t)|-\frac{1}{2}\text{tr}(\nu\mathbf{S}_f(t)^{-1}\mathbf{\tilde{\Sigma}}_f(t))\},
	\end{eqnarray}
	respectively. Then we have
	\begin{equation}
		\mathcal{\hat{L}}(\theta) - \mathcal{L}(\theta) = \mathcal{O}_p(A^{1/2}B^{5/2})
	\end{equation}
\end{lem}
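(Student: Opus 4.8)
The plan is to subtract the two likelihoods termwise, bound each surviving difference, and show that the rate $\mathcal{O}_p(A^{1/2}B^{5/2})$ is produced by a single dominant term. In $\hat{\mathcal{L}}(\theta)-\mathcal{L}(\theta)$ the constant contributions involving $\ln(2)$, $\ln(\pi)$ and $\ln\Gamma(\cdot)$ cancel exactly. Up to the average $\frac{1}{T}\sum_{t=1}^{T}$ and coefficients ($\nu$, $\frac{\nu-r-1}{2}$) that are bounded uniformly on $\Theta$ by compactness (A5), this leaves three groups of differences: $\ln|\mathbf{\hat{S}}_f(t)|-\ln|\mathbf{S}_f(t)|$, $\ln|\mathbf{\hat{\Sigma}}_f(t)|-\ln|\mathbf{\tilde{\Sigma}}_f(t)|$, and $\text{tr}(\mathbf{\hat{S}}_f(t)^{-1}\mathbf{\hat{\Sigma}}_f(t))-\text{tr}(\mathbf{S}_f(t)^{-1}\mathbf{\tilde{\Sigma}}_f(t))$.

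Before bounding these, I would assemble four ingredients. First, Theorem 1 already gives $\mathbf{\hat{\Sigma}}_f(t)-\mathbf{\tilde{\Sigma}}_f(t)=\mathcal{O}_p(A^{1/2}B^{3/2})$ uniformly in $t$. Second, (A2) gives $\|\mathbf{\tilde{\Sigma}}_f(t)\|_2=\mathcal{O}_p(B)$, since $\|\mathbf{A}'\mathbf{\Sigma}_0\mathbf{A}\|_2$ is bounded. Third, from the recursion (10) one has $\mathbf{S}_f(t)\ge CC'>0$ and $\mathbf{\tilde{\Sigma}}_f(t)\ge\mathbf{A}'\mathbf{\Sigma}_0\mathbf{A}>0$, so $\|\mathbf{S}_f(t)^{-1}\|_2$, $\|\mathbf{\hat{S}}_f(t)^{-1}\|_2$ and $\|\mathbf{\tilde{\Sigma}}_f(t)^{-1}\|_2$ are all $\mathcal{O}_p(1)$. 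Fourth, subtracting the two copies of (10), the difference $\mathbf{\hat{S}}_f(t)-\mathbf{S}_f(t)$ obeys the same linear recursion driven by the terms $A_j\{\mathbf{\hat{\Sigma}}_f(t-j)-\mathbf{\tilde{\Sigma}}_f(t-j)\}A_j'$; by the stationarity and ergodicity assumed in (A4) this recursion is stable, so a forcing term of order $A^{1/2}B^{3/2}$ is not amplified and $\mathbf{\hat{S}}_f(t)-\mathbf{S}_f(t)=\mathcal{O}_p(A^{1/2}B^{3/2})$ uniformly in $t$.

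With these in hand each group is handled by a first-order expansion. For the log-determinants I would use $\ln|X+\Delta|-\ln|X|=\text{tr}(X^{-1}\Delta)+O(\|X^{-1}\Delta\|_2^2)$ with $\|X^{-1}\|_2=\mathcal{O}_p(1)$ and $\|\Delta\|_2=\mathcal{O}_p(A^{1/2}B^{3/2})$, which makes both $\ln|\mathbf{\hat{S}}_f(t)|-\ln|\mathbf{S}_f(t)|$ and $\ln|\mathbf{\hat{\Sigma}}_f(t)|-\ln|\mathbf{\tilde{\Sigma}}_f(t)|$ equal to $\mathcal{O}_p(A^{1/2}B^{3/2})$. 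For the trace term I would split (suppressing the argument $t$)
\[
\text{tr}(\mathbf{\hat{S}}_f^{-1}\mathbf{\hat{\Sigma}}_f)-\text{tr}(\mathbf{S}_f^{-1}\mathbf{\tilde{\Sigma}}_f)=\text{tr}\big(\mathbf{\hat{S}}_f^{-1}(\mathbf{\hat{\Sigma}}_f-\mathbf{\tilde{\Sigma}}_f)\big)+\text{tr}\big((\mathbf{\hat{S}}_f^{-1}-\mathbf{S}_f^{-1})\mathbf{\tilde{\Sigma}}_f\big).
\]
The first piece is $\mathcal{O}_p(1)\cdot\mathcal{O}_p(A^{1/2}B^{3/2})=\mathcal{O}_p(A^{1/2}B^{3/2})$. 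The second piece carries the extra power of $B$: writing $\mathbf{\hat{S}}_f^{-1}-\mathbf{S}_f^{-1}=-\mathbf{\hat{S}}_f^{-1}(\mathbf{\hat{S}}_f-\mathbf{S}_f)\mathbf{S}_f^{-1}=\mathcal{O}_p(A^{1/2}B^{3/2})$ and multiplying by $\|\mathbf{\tilde{\Sigma}}_f\|_2=\mathcal{O}_p(B)$ gives $\mathcal{O}_p(A^{1/2}B^{5/2})$. Since $B\ge1$ this contribution dominates all the rest, and averaging over $t$ (which preserves the uniform-in-$t$ order) yields $\hat{\mathcal{L}}(\theta)-\mathcal{L}(\theta)=\mathcal{O}_p(A^{1/2}B^{5/2})$.

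I expect the real obstacle to be the propagation of the estimation error through the CAW recursion (10): one must invoke the stability guaranteed by (A4) to keep $\mathbf{\hat{S}}_f(t)-\mathbf{S}_f(t)$ at order $A^{1/2}B^{3/2}$ uniformly in $t$ rather than letting it accumulate with $t$, and then correctly identify the single interaction term $\text{tr}((\mathbf{\hat{S}}_f^{-1}-\mathbf{S}_f^{-1})\mathbf{\tilde{\Sigma}}_f)$ — a small inverse-difference times a large $\mathcal{O}_p(B)$ matrix — as the source of $B^{5/2}$ instead of $B^{3/2}$.
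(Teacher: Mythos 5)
Your proposal is correct and follows essentially the same route as the paper's proof: the same termwise decomposition, the same key bounds ($\|\mathbf{\tilde{\Sigma}}_f(t)\|_2=\mathcal{O}_p(B)$, $\mathbf{\hat{S}}_f(t)-\mathbf{S}_f(t)=\mathcal{O}_p(A^{1/2}B^{3/2})$ via the stable linear recursion, $\|\mathbf{S}_f(t)^{-1}\|_2$ bounded through $\mathbf{S}_f(t)\ge CC'$), the identical trace splitting with the resolvent identity $\mathbf{\hat{S}}_f^{-1}-\mathbf{S}_f^{-1}=-\mathbf{\hat{S}}_f^{-1}(\mathbf{\hat{S}}_f-\mathbf{S}_f)\mathbf{S}_f^{-1}$, and the same identification of $\text{tr}\bigl((\mathbf{\hat{S}}_f^{-1}-\mathbf{S}_f^{-1})\mathbf{\tilde{\Sigma}}_f\bigr)$ as the dominant $\mathcal{O}_p(A^{1/2}B^{5/2})$ term. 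The only cosmetic difference is in the log-determinant terms, where your expansion gives $\mathcal{O}_p(A^{1/2}B^{3/2})$ versus the paper's $\mathcal{O}_p(A^{1/2}B^{1/2})$, but both are dominated by the trace term, so the conclusion is unaffected.
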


\begin{proof}[Proof.]
	By simple algebraic manipulations, we have
	\begin{eqnarray}
		\mathcal{\hat{L}}(\theta) = \mathcal{L}(\theta) &+& \frac{1}{T}\sum_{t=1}^{T} \{ -\frac{\nu}{2}(\text{ln}|\frac{\mathbf{\hat{S}}_f(t)}{\mathbf{S}_f(t)}|) + (\frac{\nu-r-1}{2})\text{ln}|\frac{\mathbf{\hat{\Sigma}}_f(t)}{\mathbf{\tilde{\Sigma}}_f(t)}| \nonumber \\
		&&  - \frac{1}{2}\text{tr}(\nu(\mathbf{\hat{S}}_f(t)^{-1}\mathbf{\hat{\Sigma}}_f(t)-\mathbf{S}_f(t)^{-1}\mathbf{\tilde{\Sigma}}_f(t)))\}
	\end{eqnarray}
	
	\noindent
	Since $\mathbf{S}_f(t)$ is the conditional mean of $\mathbf{\tilde{\Sigma}}_f(t)$, it can be proved that $max_{t}\parallel \mathbf{S}_f(t) \parallel_2 = \mathcal{O}_p(B)$ and $\mathbf{\hat{S}}_f(t) - \mathbf{S}_f(t) = \mathcal{O}_p(A^{1/2}B^{3/2})$ as $\mathbf{\tilde{\Sigma}}_f(t)$ and $\mathbf{\hat{\Sigma}}_f(t) - \mathbf{\tilde{\Sigma}}_f(t)$. The basic idea of the proof is the following. We can treat Equation (10) as a linear map from $(\mathbf{\tilde{\Sigma}}_f(t))$ to $(\mathbf{S}_f(t))$: $(\mathbf{S}_f(t))=\mathcal{L}\cdot(\mathbf{\tilde{\Sigma}}_f(t))$ where $\mathcal{L}$ is a linear operator with bounded norm. It follows that $\forall t$, $\parallel \mathbf{S}_f(t) \parallel_2 \le \alpha \sup_s \parallel \mathbf{\tilde{\Sigma}}_f(t) \parallel_2$ for some constant $\alpha$, which gives the required bound. A similar argument can be applied to $\mathbf{\hat{S}}_f(t) - \mathbf{S}_f(t)$.\\
	
	\noindent
	For the first term of the additional part, since $\mathbf{S}_f(t) = \mathcal{O}_p(B)$, we have
	\begin{equation}
		||\frac{\mathbf{\hat{S}}_f(t)-\mathbf{S}_f(t)}{\mathbf{S}_f(t)}||_2 = \mathcal{O}_p(A^{1/2}B^{1/2})
	\end{equation}
	Thus,
	\begin{equation}
		\text{ln}|\frac{\mathbf{\hat{S}}_f(t)}{\mathbf{S}_f(t)}| = (|\frac{\mathbf{\hat{S}}_f(t)}{\mathbf{S}_f(t)}|-1) - o_p(|\frac{\mathbf{\hat{S}}_f(t)}{\mathbf{S}_f(t)}|-1) = \mathcal{O}_p(A^{1/2}B^{1/2})
	\end{equation}
	
	\noindent
	For the second term, it is easy to prove that $\mathbf{\tilde{\Sigma}}_f(t) = \mathcal{O}_p(B)$, so that we have
	\begin{equation}
		||\frac{\mathbf{\hat{\Sigma}}_f(t)-\mathbf{\tilde{\Sigma}}_f(t)}{\mathbf{\tilde{\Sigma}}_f(t)}||_2 = \mathcal{O}_p(A^{1/2}B^{1/2})
	\end{equation}
	Thus,
	\begin{equation}
		\text{ln}|\frac{\mathbf{\hat{\Sigma}}_f(t)}{\mathbf{\tilde{\Sigma}}_f(t)}| = (|\frac{\mathbf{\hat{\Sigma}}_f(t)}{\mathbf{\tilde{\Sigma}}_f(t)}|-1) - o_p(|\frac{\mathbf{\hat{\Sigma}}_f(t)}{\mathbf{\tilde{\Sigma}}_f(t)}|-1) = \mathcal{O}_p(A^{1/2}B^{1/2}) 
	\end{equation}
	
	\noindent
	For the third term, first note that as $\mathbf{S}_f(t) \ge CC'$, there exists a constant $w>0$ such that the minimum eigenvalue of $\mathbf{S}_f(t)$ is larger than or equal to $w$. Consequently, $||\mathbf{S}_f(t)^{-1}||_2 \le \frac{1}{w}$ which is bounded. As a result
	\begin{eqnarray}
		&& ||\mathbf{\hat{S}}_f(t)^{-1}\mathbf{\hat{\Sigma}}_f(t)-\mathbf{S}_f(t)^{-1}\mathbf{\tilde{\Sigma}}_f(t)||_2 \nonumber \\
		&\le& ||\mathbf{\hat{S}}_f(t)^{-1}(\mathbf{\hat{\Sigma}}_f(t)-\mathbf{\tilde{\Sigma}}_f(t))||_2+||(\mathbf{\hat{S}}_f(t)^{-1}-\mathbf{S}_f(t)^{-1})\mathbf{\tilde{\Sigma}}_f(t)||_2 \nonumber \\
		&\le& ||\mathbf{\hat{S}}_f(t)^{-1}||_2||(\mathbf{\hat{\Sigma}}_f(t)-\mathbf{\tilde{\Sigma}}_f(t))||_2+||\mathbf{\hat{S}}_f(t)^{-1}||_2||\mathbf{\hat{S}}_f(t)-\mathbf{S}_f(t)||_2||\mathbf{S}_f(t)^{-1}||_2||\mathbf{\tilde{\Sigma}}_f(t)||_2 \nonumber \\
		&=& \frac{1}{w}\mathcal{O}_p(A^{1/2}B^{3/2}) + \frac{1}{w^2}\mathcal{O}_p(A^{1/2}B^{3/2})\mathcal{O}_p(B) \nonumber \\
		&=& \mathcal{O}_p(A^{1/2}B^{5/2}) 
	\end{eqnarray}
\end{proof}

\begin{proof}[Proof of Theorem 2.]
	From Lemma 3, we have $\mathcal{\hat{L}}(\theta)-\mathcal{L}(\theta) = K(\theta,\mathbf{\tilde{\Sigma}}_f(t)) = \mathcal{O}_p(A^{1/2}B^{5/2})$. By taking derivatives of both sides with respect to any parameters from $\mathbf{\theta}$, we have
	\begin{equation}
		\frac{\partial\mathcal{\hat{L}}(\theta)}{\partial\theta_i}=\frac{\partial\mathcal{L}(\theta)}{\partial\theta_i} +\frac{\partial K(\theta,\mathbf{\tilde{\Sigma}}_f(t))}{\partial\theta_i}
	\end{equation}
	By Lemma 2, we have
	\begin{eqnarray}
		0=\frac{\partial\mathcal{\hat{L}}(\mathbf{\hat{\theta}})}{\partial\theta_i} &=& \frac{\partial\mathcal{L}(\mathbf{\hat{\theta}})}{\partial\theta_i}+\mathcal{O}_p(A^{1/2}B^{5/2}) \nonumber \\
		&=&\frac{\partial\mathcal{L}(\mathbf{\tilde{\theta}})}{\partial\theta_i} + \sum_j \frac{\partial^2\mathcal{L}}{\partial\theta_i\partial\theta_j}(\mathbf{\tilde{\theta}})(\mathbf{\hat{\theta}}-\mathbf{\tilde{\theta}}) + o_p(\mathbf{\hat{\theta}}-\mathbf{\tilde{\theta}})+\mathcal{O}_p(A^{1/2}B^{5/2}) \nonumber \\
		&=&\sum_j \frac{\partial^2\mathcal{L}}{\partial\theta_i\partial\theta_j}(\mathbf{\tilde{\theta}})(\mathbf{\hat{\theta}}-\mathbf{\tilde{\theta}}) + o_p(\mathbf{\hat{\theta}}-\mathbf{\tilde{\theta}})+\mathcal{O}_p(A^{1/2}B^{5/2}) 
	\end{eqnarray}
	From Lemma 1, we know that $\partial^2\mathcal{L}(\mathbf{\tilde{\theta}})/\partial\theta_i\partial\theta_j$ are bounded for any $\mathbf{\tilde{\theta}}$. In addition, from Conditions (A4) and (A6), there exists some constant $\ell$ such that in probability,
	\begin{equation}
		|\frac{\partial^2\mathcal{L}}{\partial\theta_i\partial\theta_j}(\mathbf{\tilde{\theta}})|>\ell>0
	\end{equation}
	uniformly. As a result, we have
	\begin{equation}
		\mathbf{\hat{\theta}}-\mathbf{\tilde{\theta}} = \mathcal{O}_p(A^{1/2}B^{5/2})
	\end{equation}
\end{proof}

\newpage
\noindent
{\bf Figure Legends}:\\

\noindent
Figure 2:\\
The subfigure with title "11" shows the time series of the variance for the first factor, the subfigure with title "21" shows that of the covariance between the first and second factors, and so on.\\

\noindent
Figure 3:\\
The first subplot is for the first entry of the $6$-dimensional vector, and so on. For each of the 6 time series, we show the in-sample fit in the upper panel, where the solid line stands for the real data and the dashed line is the fitted series. The residuals are shown in the middle panel while the ACF and PACF of residuals are displayed in the lower panel. \\

\noindent
Figure 4:\\
Two plots for realized variances and two plots for realized covariances are shown in Figure $\ref{f4}$. AIG and BA are chosen for the realized variances while the realized covariances between AIG and C, and between AXP and BA are shown in the following figures.

\newpage
  \bibliography{mybibfile}
\end{document}